\newif\ifconfver
\newif\ifcutshort      
\newif\ifcutshortlvltwo  
\definecolor{orange}{RGB}{255,107,0}
\definecolor{green}{RGB}{0,180,80}
\newcommand{\Bc}{\mathcal{B}}
\newcommand{\Mc}{\mathcal{M}}
\newcommand{\Pc}{\mathcal{P}}
\newcommand{\Qc}{\mathcal{Q}}
\newcommand{\Rc}{\mathcal{R}}
\newcommand{\Sc}{\mathcal{S}}
\newcommand{\Tc}{\mathcal{T}}
\newcommand{\Xc}{\mathcal{X}}
\newcommand{\Yc}{\mathcal{Y}}
\newcommand{\Zc}{\mathcal{Z}}
\newcommand{\Cbb}{\mathbb{C}}
\newcommand{\Ebb}{\mathbb{E}}
\newcommand\Ab{\ensuremath{{\bf A}}}
\newcommand\Bb{\ensuremath{{\bf B}}}
\newcommand\Wb{\ensuremath{{\bf W}}}
\newcommand\Xb{\ensuremath{{\bf X}}}
\newcommand\pb{\ensuremath{{\bf p}}}
\newcommand\wb{\ensuremath{{\bf w}}}
\newcommand\taub{\ensuremath{{\bm \tau}}}
\newcommand\oneb{\ensuremath{{\bm 1}}}
\newcommand\alphab{\boldsymbol\alpha}
\newcommand{\CN}{\mathcal{CN}}
\DeclareMathOperator*   {\E}{\mathbb{E}}
\DeclareMathOperator*   {\argmax}{arg\,max}
\newtheorem{theorem}{Theorem}
\newtheorem{Definition}{Definition}
\newtheorem{proof}[theorem]{Proof}
\begin{document}
	
\bibliographystyle{IEEEtran}

\title{Max-Min Fairness User Scheduling and Power Allocation in  Full-Duplex OFDMA Systems}

\ifconfver \else {\linespread{1.1} \rm \fi
\author{\IEEEauthorblockN{Xiaozhou Zhang\IEEEauthorrefmark{1},
Tsung-Hui Chang\IEEEauthorrefmark{2},
Ya-Feng Liu\IEEEauthorrefmark{3},\\
Chao Shen\IEEEauthorrefmark{1}, and
Gang Zhu\IEEEauthorrefmark{1},}

\thanks{Part of this work was presented in 2017 IEEE ICASSP, New Orleans, USA \cite{Zhang2017}.
The work of T.-H. Chang was supported in part by the NSFC, China, under Grant 61571385 and Grant 61731018, and in part by the Shenzhen Fundamental Research Fund under Grant No. ZDSYS201707251409055 and No. KQTD2015033114415450. The work of X. Zhang and C. Shen are supported by the National Key R$\&$D Program of China (2016YFE0200900), NSFC, China (61871027, U1834210), the State Key Laboratory of Rail Traffic Control and Safety (RCS2019ZZ002), and the Major projects of Beijing Municipal Science and Technology Commission (Z181100003218010). The work of Y. Liu is supported in part by National Natural Science Foundation of China (NSFC) Grant No. 11671419, 11571221, 11688101, and 11631013, and in part by Beijing Natural Science Foundation Grant L172020.

X. Zhang, C. Shen and G. Zhu are with the State Key Laboratory of Rail Traffic Control and Safety, Beijing Jiaotong University, Beijing, China (email: xzzhang@bjtu.edu.cn; shenchao@bjtu.edu.cn; gzhu@bjtu.edu.cn).

T.-H. Chang is with the School of Science and Engineering, The Chinese University of Hong Kong, Shenzhen, China, and also with the Shenzhen Research Institute of Big Data, Shenzhen, China (email: tsunghui.chang@ieee.org)

Y.-F. Liu is with the State Key Laboratory of Scientific and Engineering Computing, Institute of Computational Mathematics and Scientific/Engineering Computing, Academy of Mathematics and Systems Science, Chinese Academy of Sciences, Beijing, China (e-mail: yafliu@lsec.cc.ac.cn).
}}
		
\maketitle
\begin{abstract}
In a full-duplex (FD) multi-user network, the system performance is not only limited by the self-interference but also by the co-channel interference due to the simultaneous uplink and downlink transmissions.
Joint design of the uplink/downlink transmission direction of users and the power allocation is crucial for achieving high system performance in the FD multi-user network.
In this paper, we investigate the joint uplink/downlink transmission direction assignment (TDA), user paring (UP) and power allocation problem for maximizing the system max-min fairness (MMF) rate in a FD multi-user orthogonal frequency division multiple access (OFDMA) system. The problem is formulated with a two-time-scale structure where the TDA and the UP variables are for optimizing a long-term MMF rate while the power allocation is for optimizing an instantaneous MMF rate during each channel coherence interval. We show that the studied joint MMF rate maximization problem is NP-hard in general. To obtain high-quality suboptimal solutions, we propose efficient methods based on simple relaxation and greedy rounding techniques. Simulation results are presented to show that the proposed algorithms are effective and achieve higher MMF rates than the existing heuristic methods.
\\\\
\noindent {\bfseries Keywords} Full-duplex, OFDMA, max-min fairness, user pairing, two-time-scale optimization
\end{abstract}
\IEEEpeerreviewmaketitle

\section{Introduction}

The full-duplex (FD) system has drawn considerable attention in recent years owing to its potential of doubling the system throughput by enabling the devices to transmit and receive information signals at the same time and over the same frequency.
However, to build a practical FD communication system, there still exist several fundamental challenges in circuit design and signal processing.
The major challenge lies in mitigating the so-called self-interference (SI) that is caused by simultaneous signal transmission and reception.
Fortunately, the breakthroughs in analog and digital interference cancellation schemes \cite{Katti_ACM13,Alexandropoulos2017}
have made effective mitigation of the SI possible and it has been shown that a FD system may outperform the half duplex (HD) system in practical scenarios \cite{Cheng2013}. For example, the FD techniques have been deployed in the relay networks \cite{Olivo2016,Xing2017} and the WiFi systems \cite{EDCOFDWS,Duarte2014}.

The FD technique has also been considered in the cellular networks \cite{Cirik2017,Cirik2015a,SPAWC17,Chang2018,CirikTC16,SunTWC16}, where the FD base stations (BSs) can communicate with both the downlink and the uplink user equipments (UEs) simultaneously.
In such FD networks, the downlink UE receives not only the desired information signal from the BS but also the interference signal from the uplink UE.
Therefore, this new form of co-channel interference (CCI) becomes another bottleneck for achieving high system performance in the FD cellular networks.
It is found that due to the SI and the CCI, the uplink and the downlink transmissions, i.e., the beamforming, have to be jointly designed, making the design problems much more challenging than the ones in the conventional HD cellular networks. In addition to the beamforming schemes, judiciously scheduling users, i.e., assigning the uplink/downlink transmission direction of UEs and pairing the downlink UEs with the uplink UEs, can greatly improve the FD system performance. This is because if the CCI between an uplink UE and a downlink UE is strong, then it is undesirable to group them together as a downlink/uplink pair. Following this idea, references \cite{Ahn2016a,Shahsavari2017}
studied joint beamforming/power control and uplink/downlink UE selection algorithms for maximizing the network throughput.

This paper considers a FD multi-user orthogonal frequency division multiple access (OFDMA) system, where different channels (i.e., resource blocks (RBs)) may be occupied by different UEs, and one RB can at most serve one uplink UE and one downlink UE simultaneously due to the FD capability. In the FD OFDMA system, the CCI occurs between the uplink and the downlink UEs in the same RB. Therefore, careful selection of a pair of uplink and downlink UEs for each RB can significantly enhance the throughput performance of the FD systems.
The goal of this paper is to answer the following fundamental question: How to properly schedule UEs and allocate power in the multi-user FD OFDMA system to maximize the system performance? Scheduling UEs here refers to determine which subset of UEs should be downlink/uplink UEs, and how the downlink/uplink UEs should be paired and assigned to appropriate RBs for data transmission.

\subsection{Related Works}
In the FD cellular networks, multiple-antenna techniques have been applied to the FD systems to overcome the CCI \cite{Cirik2017,Cirik2015a,SPAWC17,Chang2018,CirikTC16,SunTWC16}.
Specifically, references \cite{Cirik2017,Cirik2015a,SPAWC17} studied the beamforming designs for mitigating the SI and the CCI in the multi-user systems, in order to maximize the network sum rate, the proportional fairness rate, and the max-min fairness (MMF) rate, respectively. Besides, \cite{Chang2018,CirikTC16,SunTWC16}
considered designing the beamformers which satisfy the quality-of-service constraints in the multi-user system or the multiple-input multiple-output interference channel.

Recently, there is a considerable number of research works focusing on the FD systems with orthogonal channels, such as OFDMA or time division multiple access (TDMA) systems. In the OFDMA system, since one RB can serve at most one uplink UE and one downlink UE, the problem of assigning the uplink/downlink UE pairs to each RB has been widely studied. Aiming at maximizing the network sum rate of a FD OFDMA system, reference \cite{Han2016} and \cite{Nam2015} respectively used a matching algorithm and heuristic iterative RB assignment strategy for UE paring and RB allocation. Power allocation of the BS and the UEs was obtained by heuristic water-filling strategy.
Reference \cite{Silva2016} developed a Lagrangian dual based algorithm for joint UE pairing and power allocation with the target of maximizing the MMF rate of the UEs.
While the works in \cite{Silva2016,Han2016,Nam2015} assumed that the uplink/downlink transmission direction of UEs have been given in a prior, reference \cite{Shen2013} presented a heuristic algorithm for jointly determining the transmission direction assignment (TDA) of UEs, UE pairing (UP) and power allocation for maximizing the network sum rate.
In our previous work \cite{Zhang2017},  for MMF rate maximization, we considered the joint TDA and UP optimization with fixed and uniform power allocation, and presented a linear program based relaxation-and-rounding method.

\subsection{Contributions}
In this paper, like \cite{Silva2016}, we consider a multi-user FD OFDMA system, in which a FD BS communicates with a set of HD UEs aiming to maximize the MMF rate. However, different from \cite{Silva2016} but following a similar idea in \cite{Shen2013}, we assume that the TDA of the UEs are undetermined and optimize the uplink/downlink TDA, UP and power allocation jointly. In contrast to most of the existing works and our previous work \cite{Zhang2017} but sharing the same idea as \cite{Shahsavari2017}, we formulate the joint design problem as a two-time-scale MMF rate maximization problem. In particular, by considering the fact that the TDA and the UP solutions should not change as frequently as the fast fading of the wireless channels, in the formulated two-time-scale problem, the TDA and the UP variables are optimized to maximize a long-term MMF rate averaged over fast fading channels, while power allocation is performed for maximizing the MMF rate during each coherence interval. Such a two-time-scale formulation is not only more practical but also reduces significant signaling overhead as well as computational burden since the TDA and the UP solutions do not need to change whenever the small-scale channel fading changes.

Our technical contributions are twofold. Firstly, we conduct a computational complexity analysis for the formulated two-time-scale joint MMF rate maximization problem and prove that the joint MMF rate maximization problem is NP-hard. While \cite{Han2016} has studied the computational complexity for a joint user paring and sum rate maximization problem in the FD system, the results therein cannot imply the computational complexity of our joint MMF rate maximization problem. In fact, as shown in \cite{Sun2015}, simply adding a constraint could make an original NP-hard problem become polynomial time solvable.
In our previous work \cite{Zhang2017}, we have stated that the joint MMF rate maximization problem is NP-hard.
In this paper, we present the detailed proof by building a polynomial time transformation from the 3-dimensional matching problem \cite{Arora2009}, which is known to be NP-complete, to the joint MMF rate maximization problem. The proof of the NP-hardness gives interesting insights that the joint MMF rate maximization problem has the worst complexity in the full load case, i.e., when the number of the RBs is exactly half of the number of UEs.

Secondly, building upon the method in \cite{Zhang2017}, we further propose efficient algorithms for solving the two-time-scale joint MMF rate maximization problem.
Our algorithm is based on the alternating optimization (AO) method \cite{Bertsekas1999}, \cite{Tehrani2016}.
Specifically, we handle the joint MMF rate maximization problem by solving two subproblems iteratively: one subproblem is to
optimize the TDA and UP variables with fixed power allocation based on a continuous relaxation of the binary TDA and UP variables
followed by a heuristic rounding procedure, and the other subproblem is to optimize the power allocation  by existing methods based on successive convex approximation (SCA)\cite{Razaviyayn2013b,Shen2014}. While the relaxation-and-rounding method seems to work quite well in \cite{Nam2015} for the sum rate maximization problem, for the joint MMF rate maximization problem, it may be too loose to provide meaningful feasible solutions.
A meaningful feasible solution for our problem should satisfy the HD transmission constraint (i.e., the transmission directions of the HD UEs are consistent across all RBs) and should yield a strictly non-zero MMF rate. The latter condition requires that all UEs are properly paired and assigned to at least one of the RBs. Unfortunately, the conventional relaxation algorithms for the binary TDA and UP variables may not always be valid. To ensure the UEs to satisfy the HD transmission constraint, we propose a heuristic two-stage procedure based on a continuous binary relaxation scheme, which first determines the TDA of UEs and then performs UP optimization.
To guarantee a non-zero MMF rate (which is often violated in the full load case), we further equip the proposed two-stage procedure with an iterative greedy rounding method and the $\ell_q$-norm regularization technique (where $0 < q < 1$) \cite{LiuSIAM16} in the non-full load and the full load case, respectively. Extensive simulation results are presented to examine the performance of the proposed algorithms under different load cases.

The remainder of the paper is organized as follows. Section \ref{sec:system} presents the system model and the problem formulation. Section \ref{sec:NPhard} shows the NP-hardness of the MMF rate maximization problem. Section \ref{sec:algor} presents the algorithms for efficiently solving the considered problem. Simulation results are presented in Section \ref{sec:sim} and the conclusion is drawn in Section \ref{sec:con}.
\section{System Model and Problem Formulation}\label{sec:system}
\subsection{System Model}
\begin{figure}
\linespread{1}
  \centering
   \includegraphics[width=0.49\textwidth]{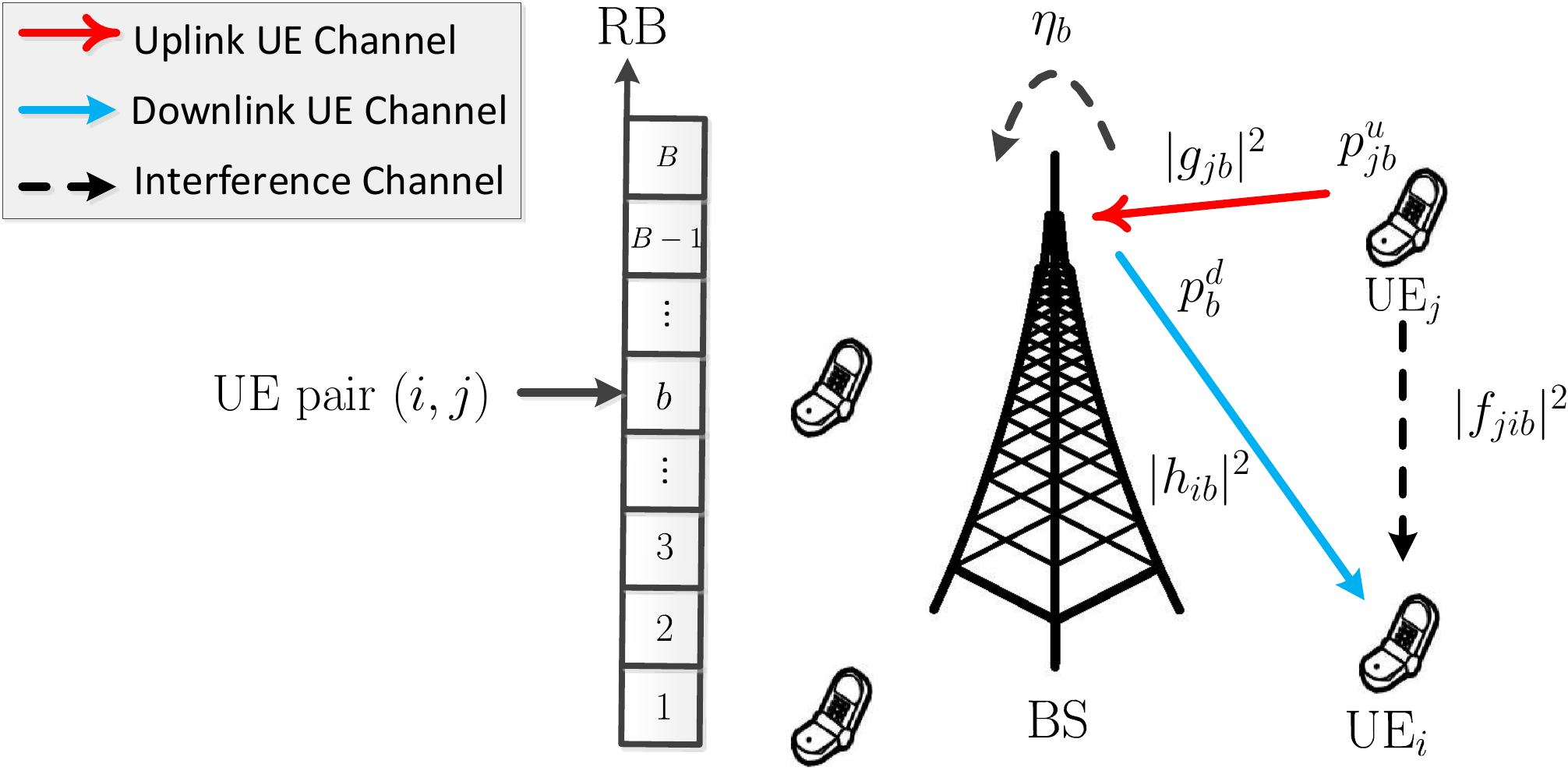}
  \caption{A single cellular OFDMA system with one FD BS and a set of HD UEs.
  	The FD BS has $B$ RBs and a downlink-uplink UE pair can be allocated to one RB.}
  \label{sysmod}
\end{figure}
As illustrated in Fig.~\ref{sysmod}, we consider an OFDMA system consisting of one (single-antenna) FD BS and $M$ (single-antenna) HD UEs. The system has $B$ frequency RBs. Owing to the FD ability, the BS can serve UEs for the uplink and downlink communications at the same time and frequency over each RB.
We assume that UEs always have data to transmit or receive; in other words, each UE can either work as an uplink UE or a downlink UE. We say that a downlink-uplink pair of UEs $(i,j)$ is allocated over RB $b$ if UE $i$ is a \emph{downlink} UE, UE $j$ is an \emph{uplink} UE and both of them are served by the BS over RB $b$. Following the OFDMA principle, we limit only one pair of UEs to be allocated to one RB. Throughout the paper, we assume that $M\le2B$; otherwise there exists at least one UE who can never be assigned to any RB.

Suppose that UE pair $(i,j)$ is allocated over RB $b$. We assume that the downlink and uplink transmissions of BS and UEs are synchronized\footnote{In particular, to obtain \eqref{equ:Rev_sgn}, UE $i$ needs to know the arrival time of the first multi-path signal component from the BS and the uplink UEs.
	Besides, the cyclic prefix (CP) length of the OFDM symbol is required to be longer than the time difference between the firstly arrived multi-path signal component and the lastly arrived multi-path signal component from the BS and the uplink UEs.
    Analogously, equation \eqref{equ:Tsm_sgn} holds as long as the BS knows the arrival time of the first multi-path signal component from the SI and the uplink UEs, and the CP length is longer than the time difference between the firstly arrived multi-path signal component and the lastly arrived multi-path signal component from the SI and the uplink UEs.}.
Then the signal received by the downlink UE $i$ over RB $b$ is given by
\begin{align}
  y_{ib}^{d}= \sqrt{ p_{b}^{d}}h_{ib}s_{ib}^{d}
             + \sqrt{p_{jb}^{u}}f_{jib}s_{jb}^{u} + n_{ib}^d, \label{equ:Rev_sgn}
\end{align}
where $p_{b}^d\ge0$ and $p_{jb}^u\ge0$ are the downlink transmission power of the BS and the uplink transmission power of UE $j$ over RB $b$, respectively;
$h_{ib}\in\Cbb$ is the downlink channel between the BS and UE $i$ over RB $b$, and
$f_{jib}\in\Cbb$ is the channel between UE $j$ and UE $i$ over RB $b$; $s_{ib}^d\in\Cbb$ and $s_{jb}^u\in\Cbb$ are the unit power signals (i.e., $\Ebb[{|s_{ib}^d|}^2] = 1$ and $\Ebb[{|s_{jb}^u|}^2] = 1$) transmitted from the BS to UE $i$ and from UE $j$ to the BS, respectively;
$n_{ib}^d\sim\CN(0,\sigma_i^2)$ represents the additive white Gaussian noise (AWGN) with zero mean, variance $\sigma_i^2$. As seen, the first term in the right-hand side (RHS) of \eqref{equ:Rev_sgn} is the desired signal of UE $i$, whereas the second term is the CCI from uplink UE $j$.

For the uplink, the signal received by the BS from UE $j$ over RB $b$ is given by
\begin{align}
  y_{jb}^{u} = \sqrt{p_{jb}^u}g_{jb}s_{jb}^u + \sqrt{ p_{b}^d}\hat{s}_{ib}^d + n_{b}^u,\label{equ:Tsm_sgn}
\end{align}
where $g_{jb}\in\Cbb$ is the uplink channel between UE $j$ and the BS over RB $b$, and $n_{b}^u\sim\CN(0,\sigma_0^2)$ is the AWGN at the BS. The first term in the RHS of \eqref{equ:Tsm_sgn} is the information signal from UE $j$, whereas the second term stands for the SI due to the FD BS.
Here we assume that the SI has been properly suppressed via some interference cancellation schemes \cite{Katti_ACM13,Alexandropoulos2017}. 
However, due to limited and non-ideal RF circuits \cite{Korpi2014,Masmoudi2016,Day2012}, the BS still suffers from the residual SI. Therefore, in \eqref{equ:Tsm_sgn},  we denote $\sqrt{ p_{b}^d} \hat s_{ib}^d$ as the residual SI term, where $\hat s_{ib}^d$ has an average residual SI channel gain
$\eta_b \triangleq  \Ebb[ |\hat s_{ib}^d|^2]$.

In this paper, we assume that the channel state information (CSI)  $h_{ib}$, $g_{jb}$ and $f_{jib}$ for all $i,j,b$ are known perfectly to the BS.
Specifically, when each uplink UE $j$ transmits the training signals, not only the BS can learn the uplink channel $g_{jb}$ but also the downlink UE $i$ can estimate $f_{jib}$ for all $b$ at the same time. Once each downlink UE $i$ obtains $f_{jib}$, it can send $f_{jib}$ and $h_{ib}$ for all $b$ back to the BS. More details about the CSI estimation schemes can be found in \cite{3GPP2018,Goyal2017}.
According to \eqref{equ:Rev_sgn} and \eqref{equ:Tsm_sgn}, when UE pair $(i,j)$ is allocated to RB $b$, the achievable downlink rate of UE $i$ and the uplink rate of UE $j$ are respectively given by
\begin{align}
R_{ijb}^{d}(\xi) &= \log_2\bigg(1+\frac{ p_{b}^d|h_{ib}|^2}{p_{jb}^u|f_{jib}|^2 + \sigma_i^2}\bigg),\label{equ:Thrputd}\\
R_{ijb}^{u}(\xi) &= \log_2\bigg(1+\frac{p_{jb}^u|g_{jb}|^2}{p_{b}^d\eta_b + \sigma_0^2 } \bigg),\label{equ:Thrputu}
\end{align}
where $\xi\triangleq \{h_{ib},f_{jib},g_{jb},\forall i,j,b\}$ denotes the quasi-static CSI of the system, which remains unchanged within the coherence time but can change from one block to another.
It is worthwhile noting from \eqref{equ:Thrputd} and \eqref{equ:Thrputu} that both the downlink and the uplink UEs suffer additional interference when compared to traditional HD OFDMA systems.
Specifically,
due to the CCI from uplink UE $j$ to downlink UE $i$, it is unwise to pair and allocate the two UEs to the same RB if the CCI channel $|f_{jib}|^2$ is stronger than the downlink channel $|h_{ib}|^2$.
This is because in that case, the downlink transmission power $p_{b}^d$ should be increased or the uplink transmission power $p_{jb}^u$ should be decreased for achieving a higher downlink rate in \eqref{equ:Thrputd}, which, however, would reduce the uplink transmission rate in \eqref{equ:Thrputu}.  Therefore, instead of pairing them, one should either allocate them to different RBs or change the two UEs to have the same transmission direction, i.e., both work as the uplink UEs or the downlink UEs, and pair them with another two UEs in different RBs.
This observation implies that the assignment of the uplink/downlink directions of the UEs and the UE paring
have significant impact on the system performance and have to be carefully designed. In the next subsection, we introduce the model for TDA and UP.
\subsection{Transmission Direction Assignment and UE Pairing}
Let $\Mc \triangleq \{1,2,...,M\}$ and $\Bc \triangleq \{1,2,...,B\}$ be the sets of UEs and RBs, respectively.
To determine whether a UE should work as an uplink UE or a downlink UE, i.e., TDA, we define a binary variable $\alpha_i\in\{0,1\}$. In particular, we let
\begin{align}
\alpha_i=
  \begin{cases}
  1,& \text{if UE $i$ is assigned to be a downlink UE,}\\
  0,& \text{otherwise}.
\end{cases}
\end{align}
To describe the UP and RB allocation, we define another binary variable $x_{ijb}\in\{0,1\}$ so that
\begin{align}
x_{ijb}=
  \begin{cases}
  1,& \text{if UE $i$ and UE $j$ are paired as $(i,j)$} \\
  & \text{and allocated to RB $b$,}\\
  0,& \text{otherwise}.
\end{cases}
\end{align}
The TDA and UP variables have to satisfy the following constraints:
\begin{itemize}
  \item \underline{OFDMA Constraint:} Under the OFDMA constraint, only one pair of UEs can be allocated to each RB $b$, i.e.,
    \begin{align}
        &\sum_{i\in\Mc}\sum_{j\in\Mc}x_{ijb} = 1,~b\in\Bc,\label{ctr:P_1pair}\\
        & x_{ijb}\in\{0,1\},~i,j\in\Mc,
        ~ b\in\Bc.\label{ctr:P_x_01}
    \end{align}
  \item \underline{HD Transmission Constraint:} Note that one UE could be assigned to more than one RB. However, since the UEs are HD, if the UE is assigned for uplink transmission (resp. downlink reception) in one RB, then it must also perform as an uplink (resp. downlink) UE over other RBs. To ensure this, we impose the following constraints
      \begin{align}
        &x_{ijb} \le \alpha_i,~i,j\in\Mc,~ b\in\Bc,\label{ctr:P_alpi}\\
        &x_{jib} \le 1-\alpha_i,~i,j\in\Mc,~ b\in\Bc,\label{ctr:P_alpj}\\
        &\alpha_{i} \in  \{0,1\},~ i\in\Mc.\label{ctr:P_alp_01}
    \end{align}
  \end{itemize}
    According to \eqref{ctr:P_alpi} to \eqref{ctr:P_alp_01}, when $\alpha_i=1$, the binary variable $x_{jib}= 0$ for all $i,j\in\Mc,b\in\Bc$, which implies that UE $i$ must consistently be a downlink UE for all allocated RBs. Similarly, when $\alpha_i=0$, UE $i$ can never be assigned as a downlink UE. Moreover, the variable $x_{iib}$ is always 0 for all $i\in\Mc$, $b\in\Bc$.
    In some scenarios, the UEs may not always have uplink data to transmit and downlink data to receive, or the TDA of the UEs has been predetermined.
    For example, if UE $i$ does not have downlink (resp. uplink) data, or UE $i$ has decided to be an uplink (resp. downlink) UE, one can deterministically set $\alpha_i=0$ (resp.  $\alpha_i=1$).
\subsection{Two-Time-Scale MMF Design}
With the definitions of the TDA and UP variables, we can express the achievable rate of each UE $i$ as
\begin{align}\label{equ:rate_i}
R_{i}(\xi)\triangleq\sum_{j\in\Mc}\sum_{b\in\Bc}
\left(x_{ijb}R_{ijb}^{d}(\xi)+x_{jib}R_{jib}^{u}(\xi)\right),i\in\Mc.
\end{align}
Note that, owing to the OFDMA constraint in \eqref{ctr:P_1pair} and the HD constraints in \eqref{ctr:P_alpi}-\eqref{ctr:P_alpj},
$R_{i}(\xi)$ in \eqref{equ:rate_i} is either the aggregate of the downlink rate or the aggregate of the uplink rate of UE $i$ over allocated RBs. In this paper, we aim to jointly optimize the TDA, UP and power allocation for maximizing an average system performance in a two-time-scale fashion. To explain this, let $U:\mathbb{R}^M \to \mathbb{R}$ be a rate utility function of the system (e.g., for the sum rate, $U(\{R_{i}(\xi)\}_{i=1}^M)=\sum_{i=1}^M R_{i}(\xi)$).
Then, we assume that the downlink and the uplink powers $( p_{b}^{d},~p_{jb}^{u})$ can be adapted at the same frequency as the CSI variation, that is, $(p_{b}^{d},~p_{jb}^{u})$ are designed to maximize the instantaneous rate utility function
\begin{align}
(\pb^{d}(\xi), \pb^{u}(\xi))=\arg\max_{\pb^{d}\in \Pc^d,~\pb^{u}\in \Pc^u}U(\{R_{i}(\xi)\}_{i=1}^M),
\end{align}
where $\pb^{d}$ and $\pb^{u}$ are the vectors respectively containing $ p_{b}^d$ and $p_{jb}^u$ for all $j\in\Mc,b\in\Bc$, and
\begin{align}
   &\Pc^d\triangleq \bigg\{p_{b}^{d}\geq 0,~\sum_{b\in\Bc}p_{b}^{d}\le P_{\rm BS},~b\in\Bc\bigg\},\label{ctr:P_p0max} \\
   &\Pc^u\triangleq \bigg\{ p_{jb}^{u}\geq 0,
   \sum_{b\in\Bc} p_{jb}^{u}\le P_{\rm UE},~j\in\Mc,~b\in\Bc
   \bigg\}, \label{ctr:P_pumax}
\end{align}
are respectively the feasible sets for the downlink and the uplink transmission powers; here $P_{\rm BS}$
and $P_{\rm UE}$ denote the maximum transmission powers of the BS and the UEs, respectively.

The TDA and UP are designed to maximize the expected rate utility; specifically, we are interested in the following problem
\begin{align}\label{eqn: P}
\max_{(\Xb,\alphab)\in \Qc} \E\nolimits_{\xi}
\bigg[\max_{\pb^{d}\in \Pc^d,~\pb^{u}\in \Pc^u}U(\{R_{i}(\xi)\}_{i=1}^M)\bigg],
\end{align} where $\Xb = \{\Xb_1,\ldots,\Xb_B\}$ is defined as a collection of UP variable matrix $\Xb_b$ with  $\Xb_b=\{x_{ijb}\}_{i,j\in\Mc}$ for all RB $b\in\Bc$; $\alphab$ is defined as a column vector of the TDA variables $\{\alpha_i\}_{i\in\Mc}$; $\Qc$ denotes the feasible set specified by the constraints in \eqref{ctr:P_1pair} to \eqref{ctr:P_alp_01}.
As seen, \eqref{eqn: P} is a two-time-scale design problem: the TDA and UP variables $(\Xb,\alphab)$ maximize the expected rate utility (long time scale) while the BS and the uplink UEs' transmission powers $(\pb^{d},\pb^{u})$ maximize the instantaneous rate utility (short time scale).
The two-time-scale design problem is meaningful for several reasons.
Firstly, it is reasonable that a UE usually has a bulk of data for either uplink or downlink transmission and TDA should not be frequently changed. Secondly, both TDA and UP largely depend on the large-scale channel conditions (e.g., the relative locations of the UEs and the BS) and therefore is not required to be adapted with the fast channel fading.
Thirdly, the two-time-scale design allows the reduction of the signaling overhead for re-assigning TDA and UP as well as reducing the system computational burden by avoiding from computing the TDA and the UP solutions whenever the channel fading changes.

The two-time-scale problem \eqref{eqn: P} is a stochastic optimization problem. It is in general difficult to solve as there is no closed-form expression for the objective function.
A common approach dealing with such difficulty is to employ the sample average approximation (SAA) method\cite{Birge2011,Verweij2003}, which approximates the expectation term by a sample average.
To illustrate this, denote $\xi^t\triangleq \{f_{jib}(t),~g_{ib}(t),~h_{ib}(t),~\forall~i,j,b\}$, $t\in\Tc=\{1,2,...,T\}$ to be a sequence of independently and identically distributed random CSI samples. Then we use the sample average of instantaneous rate function to approximate the expected rate, which leads to the following problem
\begin{align}\label{pro:sample_avg}
\max_{(\Xb,\alphab)\in \Qc}  \frac{1}{T}\sum_{t\in\Tc}
\bigg[\max_{\pb^{d}\in \Pc^d,~\pb^{u}\in \Pc^u}U(\{R_{i}(\xi^t)\}_{i=1}^M)\bigg].
\end{align}
It is worth mentioning that there exist theoretical results that characterize the approximation performance of the SAA method; interested readers may refer to \cite{Birge2011}.

While problem \eqref{pro:sample_avg} can accommodate any valid utility function $U$, in this paper we are particularly interested in the MMF rate, i.e.,
\begin{align}
  U(\{R_{i}(\xi^t)\}_{i=1}^M) =\min\{R_{1}(\xi^t)/\gamma_1,\ldots,R_{M}(\xi^t)/\gamma_M\},\nonumber
\end{align}
where $\gamma_1,\ldots,\gamma_M>0$ are some weights. In this case, problem \eqref{pro:sample_avg} becomes
\begin{align}\label{pro:sample_avg mmf}
\max_{(\Xb,\alphab)\in \Qc}  \frac{1}{T}\sum_{t\in\Tc}
\bigg[\max_{\pb^{d}\in \Pc^d,\,\pb^{u}\in \Pc^u}   \min_{i\in\Mc}  \left\{\frac{R_{i}(\xi^t)}{\gamma_i}\right\} \bigg].
\end{align}
Problem \eqref{pro:sample_avg mmf} maximizes the minimum (weighted) uplink/downlink rate of all UEs in the network and therefore ensures the fairness among the UEs.
From the perspective of the algorithm design, the MMF rate formulation \eqref{pro:sample_avg mmf} is more challenging than the formulation with other utilities since the MMF rate would be zero if there exits one UE who is not assigned to any one of the RBs. Indeed, from the complexity point of view, solving problem \eqref{pro:sample_avg mmf} is intrinsically difficult. We show in the next section that problem \eqref{pro:sample_avg mmf}  is in fact strongly NP-hard. In Section \ref{sec:algor}, we then propose some efficient approximation algorithms for solving problem \eqref{pro:sample_avg mmf}.

\section{Computational Complexity Analysis}\label{sec:NPhard}

\subsection{Brief Introduction to Complexity Theory}\label{subsectionintroduction}

In computational complexity theory{\cite{Complexitybook,combook2}}, a problem is said to be NP-hard if it is at least as hard as any problem in the class NP (problems that are solvable in Nondeterministic Polynomial time). NP-complete problems are the hardest problems in NP in the sense that if any NP-complete problem is solvable in polynomial time, then each problem in NP is solvable in polynomial time. A problem is strongly NP-hard (strongly NP-complete) if it is NP-hard (NP-complete) and cannot be solved by a pseudo-polynomial time algorithm. An algorithm that solves a problem is called a \emph{pseudo-polynomial} time algorithm if its time complexity function is bounded above by a polynomial function related to both of the {length} and the numerical values of the given data of the problem. This is in contrast to the polynomial time algorithm whose time complexity function depends only on the length of the given data of the problem. It is widely believed that, unless P$=$NP, there cannot exist a polynomial time algorithm to solve any NP-complete, NP-hard, or strongly NP-hard problem.

The standard way to prove an optimization problem is NP-hard is to establish the NP-hardness of its corresponding feasibility problem or decision problem. The latter is the problem to decide whether the global minimum (maximum) of the optimization problem is below (above) a given threshold or not. To show a decision problem $\mathcal{P}_2$ is NP-hard,
we usually follow three steps: 1) choose a suitable NP-complete decision problem $\mathcal{P}_1;$ 2) construct a {polynomial
time} transformation from any instance of $\mathcal{P}_1$ to an instance of $\mathcal{P}_2;$
3) prove under this transformation that any instance of problem
  $\mathcal{P}_1$ is true if and only if the constructed instance of problem $\mathcal{P}_2$ is true. See \cite{Complexitybook,combook2} for more details on complexity theory.

\subsection{Strong NP-Hardness of Problem \eqref{pro:sample_avg mmf}}\label{subsectionproblem}
In this subsection, we show that problem \eqref{pro:sample_avg mmf} is strongly NP-hard. The strong NP-hardness proof of problem \eqref{pro:sample_avg mmf} is based on a polynomial time reduction from the 3-dimensional matching problem \cite{Arora2009}, which is known to be NP-complete.
\begin{Definition} \textbf{(3-Dimensional Matching Problem with Size $K$)} \label{3-match}
	Given three sets $\Xc$, $\Yc$, $\Zc$ with $|\Xc|=|\Yc|=|\Zc|=K$ and a subset $\Sc\subseteq\Xc\times\Yc\times\Zc$. The 3-dimensional matching problem is to check whether there exists a subset $\Rc \subseteq\Xc\times\Yc\times\Zc$ satisfying
	\begin{enumerate}[(C1)]
		\item $\Rc\subseteq \Sc$;
		\item $|\Rc| = K$;
		\item For any two different triples $(i_x, j_y, \ell_z)\in \Rc$, $(i'_x,j'_y,\ell'_z)\in \Rc$, we have $i_x\neq i'_x$, $j_y\neq j'_y$, and $\ell_z\neq \ell'_z$.
	\end{enumerate}
   The subset $\Rc$ satisfying (C1), (C2), and (C3) is called a 3-dimensional match.
\end{Definition}

We are now ready to present the main result in this section.
\begin{theorem}\label{them:NP-hard}
	The MMF rate maximization problem \eqref{pro:sample_avg mmf} is strongly NP-hard.
\end{theorem}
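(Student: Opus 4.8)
The plan is to prove strong NP-hardness by a polynomial-time reduction from the 3-dimensional matching problem of Definition~\ref{3-match}. The guiding observation is that an allocation decision in \eqref{pro:sample_avg mmf} is itself a triple $(i,j,b)$ consisting of a downlink UE $i$, an uplink UE $j$ and an RB $b$, so the three coordinates of a 3DM triple $(x,y,z)$ map naturally onto (downlink UE, uplink UE, RB). I would carry out the reduction in the \emph{full-load} regime $M=2B$, which the introduction already identifies as the hardest case, and with a single CSI sample ($T=1$, $\Tc=\{1\}$) and unit weights $\gamma_i=1$; then \eqref{pro:sample_avg mmf} collapses to maximizing one instantaneous MMF rate, and hardness of this special case implies hardness of the general problem.

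Given a 3DM instance $(\Xc,\Yc,\Zc,\Sc)$ with $|\Xc|=|\Yc|=|\Zc|=K$, I would create $K$ ``downlink'' UEs $d_1,\dots,d_K$ (one per element of $\Xc$), $K$ ``uplink'' UEs $u_1,\dots,u_K$ (one per element of $\Yc$) and $B=K$ RBs (one per element of $\Zc$), so $M=2K=2B$. Set $\sigma_i^2=\sigma_0^2=1$, $\eta_b=0$, $P_{\rm BS}=K$ and $P_{\rm UE}=1$. The channel gains would encode both the direction assignment and the relation $\Sc$: put $|h_{d_i,b}|^2=1,\ |g_{d_i,b}|^2=0$ for every downlink UE and $|h_{u_j,b}|^2=0,\ |g_{u_j,b}|^2=1$ for every uplink UE, so that a UE forced into the wrong direction earns zero rate; and set the cross-interference to $|f_{u_j,d_i,b}|^2=0$ when $(i,j,b)\in\Sc$ and $|f_{u_j,d_i,b}|^2=K$ otherwise. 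The MMF decision threshold is $\tau=1$. With these values, \eqref{equ:Thrputd}–\eqref{equ:Thrputu} give $R_{d_i u_j b}^{d}=R_{d_i u_j b}^{u}=\log_2 2=1$ for an allowed triple served with $p_b^d=1,\ p_{u_j,b}^u=1$, whereas a disallowed triple carries the large interference $K$.

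For the equivalence I would argue both directions. If a match $\Rc$ exists, set $\alpha_{d_i}=1$, $\alpha_{u_j}=0$, and $x_{d_i u_j b}=1$ exactly for $(i,j,b)\in\Rc$, allocating $p_b^d=1$ and $p_{u_j,b}^u=1$; conditions (C2)–(C3) guarantee that this uses each RB and each UE exactly once, so the assignment is feasible and every UE attains rate $1$, giving MMF value $\tau$. Conversely, suppose some $(\Xb,\alphab)$ with a power allocation achieves MMF $\ge 1$. Every UE then has positive rate and is assigned to at least one RB; since only $2B=2K$ UE-slots are available for $2K$ UEs, each UE is used \emph{exactly once}. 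A misdirected UE would transmit or receive over a zero channel and earn zero rate, so the optimal $\alphab$ must agree with the intended assignment, and the $B=K$ active pairs have distinct downlink UEs, distinct uplink UEs and distinct RBs (the last by the OFDMA constraint \eqref{ctr:P_1pair}). Finally, on any active triple with $(i,j,b)\notin\Sc$, achieving $R^u\ge1$ forces $p_{u_j,b}^u=P_{\rm UE}=1$, after which $R^d\ge1$ would demand $p_b^d\ge |f_{u_j,d_i,b}|^2+1=K+1>P_{\rm BS}$, which is infeasible; hence no disallowed triple is active and $\Rc=\{(i,j,b):x_{d_i u_j b}=1\}$ satisfies (C1)–(C3).

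The crux of the argument is the reverse direction, and in particular the coupling between the uplink and downlink rates through the cross-interference: I must show that the shared per-RB power budget makes it impossible to raise both $R^d$ and $R^u$ above the threshold on a disallowed pair, which is exactly the tension captured by pitting $|f|^2=K$ against $P_{\rm BS}=K$. The supporting combinatorial facts — that the full-load count forces each UE to be used once (so (C3) is automatic) and that the zero-channel device pins down the TDA without fixing $\alphab$ by hand (although the model also permits fixing it) — require brief but careful checking. Since every numerical parameter ($K$, $P_{\rm BS}=K$, $P_{\rm UE}=1$, $|f|^2\in\{0,K\}$, unit noise and unit channels, $\tau=1$) is polynomially bounded in the 3DM input size, the reduction runs in polynomial time and introduces no large numbers; as 3DM is NP-complete with purely combinatorial data, this precludes a pseudo-polynomial algorithm for \eqref{pro:sample_avg mmf} unless $\mathrm{P}=\mathrm{NP}$, yielding strong NP-hardness.
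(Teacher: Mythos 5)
Your proposal is correct and follows essentially the same route as the paper: a polynomial-time reduction from 3-dimensional matching in the full-load regime $M=2B$ with $T=1$, threshold $\tau=1$, and cross-interference gains $|f|^2$ encoding membership in $\Sc$, with the converse direction hinging on the impossibility of pushing both the uplink and downlink rates above the threshold on a disallowed triple. The only differences are cosmetic parameter choices — you pin down the TDA via zero direct channels and derive the contradiction from the power budgets against $|f|^2=K$, whereas the paper keeps all direct channels at unity, sets $|f|^2\in\{0,1\}$ with $P_{\rm UE}=2$, and obtains the contradiction $p_b^d\ge p_b^d+2$ directly from the two rate inequalities — and both variants are valid.
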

\begin{proof} To show the theorem, it suffices to show that the following decision version of problem \eqref{pro:sample_avg mmf} with $T=1$ is strongly NP-hard:
\begin{equation}\label{pro:fea}
	\left\{\!\!\!\!\!\!\begin{array}{rl}
	& \displaystyle \sum_{j\in\Mc}\sum_{b\in\Bc}
\left(x_{ijb}R_{ijb}^{d}+x_{jib}R_{jib}^{u}\right)
	\ge \tau \gamma_i,~i\in\Mc,\\
	&(\Xb,\alphab)\in \Qc,~\pb^{d}\in \Pc^d,~\pb^{u}\in \Pc^u,
	\end{array}\right.
\end{equation}
where $\tau$ is a constant (which will be specified later). Since $T=1$, we have removed $\xi$ or $\xi^t$ in \eqref{pro:fea} for notational simplicity.

Given any instance of the 3-dimensional matching problem with size $K$, below we construct a corresponding instance of \eqref{pro:fea}. More specifically, given the three sets
$$\Xc=\{1_x,\ldots,K_x\},~\Yc=\{1_y,\ldots,K_y\},~\Zc=\{1_z,\ldots,K_z\},$$
and a subset
$$\Sc=\left\{(i_x,j_y,\ell_z)\mid i_x\in\Xc, j_y\in\Yc, \ell_z\in\Zc\right\}\subseteq \Xc\times\Yc\times\Zc,$$ let the set of UEs $\Mc$ be the union of $\Xc$ and $\Yc$ and let the set of RBs $\Bc$ be $\Zc$, i.e.,
\begin{equation}\label{setXYZ}
  \Mc=\Xc\cup\Yc~\text{and}~\Bc=\Zc.
\end{equation}
Hence,
\begin{equation}\label{BM}
B=K~\text{and}~M=2K.
\end{equation}
Without loss of generality, we let the set of the first $B$ elements in $\Mc$ be $\Xc$ and
the set of the last $B$ elements in $\Mc$ be $\Yc$. Next we set the channel coefficients as follows:
\begin{align}
&|h_{ib}|^2= 1, ~i\in\Mc,   ~b\in\Bc,\\
&|g_{jb}|^2= 1, ~j\in\Mc,   ~b\in\Bc,\\
&|f_{jib}|^2= |f_{ijb}|^2=
\begin{cases}
0,&\text{if $(i,j,b)\in \Sc$,}\\
1,&\text{otherwise},
\end{cases}\label{equ:defS}
\end{align}
set $\eta_b=1,b\in\Bc$, $\sigma_0=1$, $\sigma_i=1,i\in\Mc.$
Then, for each $(i,j,b)\in\Mc\times\Mc\times\Bc,$ the downlink rate $R_{ijb}^d$ for UE $i$ and the uplink rate $R_{ijb}^u$ for UE $j$ are respectively given by
\begin{align}\label{downlinkrate}
R_{ijb}^{d}&=\log_2\bigg(1+\frac{p_{b}^{d}}{p_{jb}^{u}|f_{jib}|^2 + 1}\bigg)\nonumber\\
&=\begin{cases}
\log_2\bigg(1+p_{b}^{d}\bigg),
&\text{if}~(i,j,b)\in \Sc,\\
\log_2\bigg(1+\frac{p_{b}^{d}}{p_{jb}^{u} + 1}\bigg), &\text{otherwise},
\end{cases}\\
R_{ijb}^{u}&=
\log_2\bigg(1+\frac{p_{jb}^{u}}{p_{b}^{d}+1}\bigg). \label{uplinkrate}
\end{align}
Moreover, set $\tau=1,$ $\gamma_i=1,~i\in\Mc$, $P_{\rm BS}=B$, and $P_{\rm UE}=2$. Hence, the constructed special instance of problem \eqref{pro:fea} becomes
\begin{equation}\label{pro:fea2}
	\left\{\!\!\!\!\!\!\!\begin{array}{rl}
	& \displaystyle \sum_{j\in\Mc}\sum_{b\in\Bc}
\left(x_{ijb}R_{ijb}^{d}+x_{jib}R_{jib}^{u}\right)
	\ge 1,i\in\Mc,\\
    & \displaystyle (\Xb,\alphab)\in \Qc,\\
    &\displaystyle \sum_{b\in\Bc} p_{jb}^{u}\le 2,~p_{jb}^{u}\geq 0, ~j\in\Mc,~b\in\Bc,	\\
    &\displaystyle \sum_{b\in\Bc}p_{b}^{d}\le B,~p_{b}^{d}\geq 0,~b\in\Bc,
	\end{array}\right.
\end{equation}
where $R_{ijb}^{d}$ and $R_{ijb}^{u}$ are given in \eqref{downlinkrate} and \eqref{uplinkrate}, respectively. We are going to show that the answer to the 3-dimensional matching
problem is yes if and only if the constructed problem \eqref{pro:fea2} is feasible.

We first show that if the answer to the 3-dimensional matching problem is yes, then the constructed problem \eqref{pro:fea2} is feasible. Suppose that there exists a 3-dimensional match $\Rc$ satisfying (C1) to (C3). Then we set
\begin{subequations}
\begin{align}
\tilde x_{ijb}&=
\begin{cases}
1,&\!\!\!\!\text{if}~(i,j,b)\in\Rc,\\
0,&\!\!\!\!\text{otherwise},
\end{cases}~(i,j,b)\in\Mc\times\Mc\times\Bc,\\
\tilde\alpha_i&=1,~i\in\Xc,~\tilde\alpha_j=0,~j\in\Yc,~\tilde p_{b}^{d} = 1,~b\in\Bc,\\
\tilde p_{jb}^{u}&=
\begin{cases}
2,&\!\!\!\!\text{if there exists an $i$ such that} (i,j,b)\in\Rc,\\
0,&\!\!\!\!\text{otherwise},
\end{cases}\\
&\text{for}~j\in\Mc,~b\in\Bc.\nonumber
\end{align}
\end{subequations}
It is simple to check that the above $(\tilde \Xb,\tilde \alphab,\tilde \pb^{d},\tilde \pb^{u})$ is a feasible solution to problem \eqref{pro:fea2}.

For the converse part, assuming that $(\tilde \Xb,\tilde \alphab,\tilde \pb^{d},\tilde \pb^{u})$ is a feasible solution to problem \eqref{pro:fea2}, we claim that the answer to the 3-dimensional matching problem is yes.
Define
\begin{align}\label{Rc}
\Rc=\{(i,j,b)\in\Mc \times \Mc \times \Bc~|~\tilde x_{ijb}=1 \}.
\end{align} Next, we prove that $\Rc$ is a 3-dimensional match satisfying (C1), (C2), and (C3) in Definition \ref{3-match}.

We first prove that $\Rc\subseteq\Sc$, i.e., (C1) is true. We prove this by contradiction.
Suppose that $\Rc\nsubseteq\Sc$. Then there must exist one triple $(i^\prime,j^{\prime},b^{\prime})\in \Mc \times \Mc\times \Bc$ such that $(i^{\prime},j^{\prime},b^{\prime})\in\Rc$ but $(i^{\prime},j^{\prime},b^{\prime})\notin\Sc$. By \eqref{equ:defS} and \eqref{Rc}, we have $|f_{j^{\prime}i^{\prime}b^{\prime}}|=1$ and $\tilde x_{i^{\prime}j^{\prime}b^{\prime}}=1$. Since $M=2B$ (c.f. \eqref{BM}) and all UEs' transmission rates are greater than or equal to $1$ (from our assumption that \eqref{pro:fea2} is feasible), each UE must occupy exactly one RB.
Consequently, for the UE $i^\prime$ and the UE $j^\prime$, we must have
\vspace{-1em}
$$R_{i^\prime j^\prime b^\prime}^{d}=\log_2\bigg(1+\frac{\tilde p_{b^\prime}^{d}}{\tilde p_{j^\prime b^\prime}^{u}+1} \bigg)\ge1$$
$$R_{i^\prime j^\prime b^\prime}^{u}=\log_2\bigg(1+\frac{\tilde p_{j^\prime b^\prime}^{u}}{\tilde p_{b^\prime}^{d}+1}\bigg)\ge1.$$
which further implies $\tilde p_{b^\prime}^{d}\ge \tilde p_{b^\prime}^{d}+2.$
This is a contradiction. Hence, $\Rc\subseteq\Sc$ and (C1) is true. This, together with the definition of $\Sc,$ immediately shows that $\Rc$ in \eqref{Rc} can be equivalently rewritten as
\begin{align}\label{Rc2}
\Rc=\{(i,j,b)\in\Xc \times \Yc \times \Bc~|~\tilde x_{ijb}=1 \}.
\end{align} Combining the above with the feasibility of $(\tilde \Xb,\tilde \alphab,\tilde \pb^{d},\tilde \pb^{u})$ further yields
\begin{align}
&\sum_{j\in\Yc}\sum_{b\in\Bc}\tilde x_{ijb}\log_2\bigg( 1+\frac{\tilde p_{b}^{d}}{ \tilde p_{jb}^{u}|f_{jib}|^2+1}\bigg)
\ge1,~i\in\Xc,\label{equ:fea_rate_X}\\
&\sum_{i\in\Xc}\sum_{b\in\Bc}\tilde x_{ijb}\log_2\bigg(1+\frac{\tilde p_{jb}^{u}}{\tilde p_{b}^{d}+1}\bigg)
\ge 1,~j\in\Yc,\label{equ:fea_rate_Y}\\
&\sum_{i\in\Xc}\sum_{j\in\Yc}\tilde x_{ijb} = 1,~b\in\Bc.\label{equ:fea_sum1}
\end{align}

We now show that $\Rc$ in \eqref{Rc2} satisfies (C2), i.e., $|\Rc|=K$. It follows from \eqref{equ:fea_sum1} that, for each $b\in\Bc$, there always exists a pair of $(i,j)\in\Xc\times \Yc$ such that $\tilde x_{ijb}=1$. As a result, there are totally $K=|\Bc|$ elements in the set $\Rc$ in \eqref{Rc2}, i.e., $|\Rc| = K$.

Finally, we show that $\Rc$ in \eqref{Rc2} satisfies (C3).
One can observe from  \eqref{equ:fea_rate_X} that, for each $i\in\Xc$, there exists at least one pair $(j,b)\in \Yc\times \Zc$ such that $\tilde x_{ijb}=1,$ which immediately implies
\begin{align}\label{Xset}
\Xc=\left\{ i \mid \text{there exists}~(j,b)\in\Yc\times\Bc~\text{such that}~(i,j,b)\in\Rc \right\}.
\end{align}
Similarly, from \eqref{equ:fea_rate_Y} and \eqref{equ:fea_sum1}, we respectively have
\begin{align}\label{Yset}
\Yc=\left\{ j \mid \text{there~exists}~(i,b)\in\Xc\times\Bc~\text{such~that}~(i,j,b)\in\Rc \right\},
\end{align}
Since $|\Xc|=|\Yc|=|\Bc|=|\Rc|=K$, it follows from \eqref{setXYZ}, \eqref{equ:fea_sum1}, \eqref{Xset} and \eqref{Yset} that, for any two different elements $(i_x,j_y,\ell_z)\in\Rc$ and $(i'_x,j'_y,\ell'_z)\in\Rc,$ we must have $i_x\neq i'_x$, $j_y\neq j'_y$, $\ell_z\neq \ell'_z$. Hence, $\Rc$ satisfies (C3).

It is simple to check that the above transformation from the 3-dimensional matching problem to the feasibility problem \eqref{pro:fea2} can be done in polynomial time. Since the 3-dimensional matching problem is strongly NP-complete, we conclude that
checking the feasibility of problem \eqref{pro:fea2} is strongly NP-hard.
Therefore, problem \eqref{pro:sample_avg mmf} is also strongly
NP-hard. \end{proof}

Two remarks on Theorem \ref{them:NP-hard} and its proof are in order. First, the proof of Theorem \ref{them:NP-hard} actually shows that problem \eqref{pro:sample_avg mmf} is strongly NP-hard even when $T=1$ (i.e., the single-time-scale formulation). By using the similar arguments as in the proof of Theorem \ref{them:NP-hard}, one can also show that problem \eqref{pro:sample_avg mmf} with only TDA and UP $(\Xb,\alphab)$ being optimization variables and the transmission powers $(\pb^d,\pb^u)$ being fixed is strongly NP-hard. Moreover, the proof of Theorem \ref{them:NP-hard} implies that the worst-case complexity of solving problem \eqref{pro:sample_avg mmf} happens when $M=2B$ (i.e., the full load case). Indeed, as is shown in Section \ref{sec:algor}, it is particularly challenging to design an efficient approximation algorithm for solving problem \eqref{pro:sample_avg mmf} in the full load case.

Second, the recent work \cite{Han2016} also studied the complexity analysis for a problem in the FD OFDMA system. However, the problem considered in \cite{Han2016} and our problem \eqref{pro:sample_avg mmf} are different. The key difference between the two problems lies in the requirement of UP and TDA. More specifically, the problem in \cite{Han2016} requires that each downlink (uplink) UE can only be paired with the same uplink (downlink) UE even over different RBs. For example, once UE $i$ is paired with UE $j$ over one RB, then UE $i$ cannot be paired with any other UEs over different RBs. This is in sharp contrast to our problem, where each downlink (uplink) UE can be flexibly paired with any uplink (downlink) UEs over different RBs. Moreover, the TDA (i.e., variable $\alphab$) is given and fixed in \cite{Han2016} but there is a freedom to design the TDA in our problem \eqref{pro:sample_avg mmf}. Another difference between the two problems is that the sum rate utility is adopted in \cite{Han2016} while the MMF rate utility is adopted in our problem. Therefore, the complexity results and techniques in \cite{Han2016} are not applicable to our problem \eqref{pro:sample_avg mmf}.
\vspace{-.9em}
\section{Proposed Algorithms}\label{sec:algor}
In this section, we present efficient algorithms for solving the two-time-scale problem \eqref{pro:sample_avg mmf}.
To the end, we assume that the statistical distribution of CSI $\xi$ is known, and a set of CSI $\xi^t,~ t=1,\ldots,T,$ are randomly generated for problem \eqref{pro:sample_avg mmf} in each time interval $T$.
As the problem involves two sets of variables $(\Xb,\alphab)$ and $\{(\pb^{d}(\xi^t), \pb^{u}(\xi^t)),~\forall~ \xi^t\}$, we jointly optimize the variables by adopting the AO method \cite{Bertsekas1999,Tehrani2016} to handle problem \eqref{pro:sample_avg mmf}.
Specifically, we handle the joint problem  \eqref{pro:sample_avg mmf} by {solving two subproblems iteratively}: one is to optimize the objective with respect to $(\Xb,\alphab)$ with fixed $\{(\pb^{d}(\xi^t), \pb^{u}(\xi^t)),~\forall~ \xi^t\}$, and the other one is to optimize the objective with respect to $\{(\pb^{d}(\xi^t), \pb^{u}(\xi^t)),~\forall~ \xi^t\}$ with fixed $(\Xb,\alphab)$.
In Section \ref{subsec:SR_SR} to Section \ref{subsec:IRM}, we focus on solving the first subproblem and propose a two-stage approach based on heuristic relaxation and iterative rounding techniques. In Section \ref{sec:WMMSE}, we consider optimizing the second subproblem.
\subsection{Performance of Simple Continuous Relaxation and Rounding}\label{subsec:SR_SR}
Considering \eqref{pro:sample_avg mmf} with power allocation variables $\{(\pb^{d}(\xi^t), \pb^{u}(\xi^t)),~\forall~ \xi^t\}$ fixed as below:
\begin{subequations}\label{pro:sample_avg mmf Xa}
\begin{align}
\max_{\substack{(\Xb,\alphab)\in \Qc,\\\taub}}~  &\frac{1}{T}\sum_{t\in\Tc} \tau_t\\
{\rm s.t.}~& R_{i}(\xi^t) \geq \tau_t \gamma_i,~i\in\Mc,~t\in\Tc,
\end{align}
\end{subequations}
where $\taub = \{\tau_t\}$ are the slack variables for the epigraph form. Problem \eqref{pro:sample_avg mmf Xa} has a linear objective function and linear constraints, but $(\Xb,\alphab)$ are binary variables.
A commonly adopted method for problem \eqref{pro:sample_avg mmf Xa} is to simply relax the binary variables to the continuous variables between zero and one, i.e., relax the constraints \eqref{ctr:P_x_01} and \eqref{ctr:P_alp_01} to
\begin{align}\label{relaxation}
0 \le x_{ijb} \le 1, ~i,j\in\Mc,b\in\Bc,~
0 \le \alpha_{i}\le 1,~i\in\Mc.
\end{align}
Since the continuous relaxation may be too loose to obtain a meaningful solution, it is useful to add some valid constraints (cuts) to the relaxed problem. For instance, \eqref{pro:sample_avg mmf Xa} would yield a non-zero MMF rate if and only if every UE is properly allocated to at least one RB and paired with some other UEs; that is, UP variable $\Xb$ must satisfy the pairing conservation constraint
\vspace{-.5em}
\begin{align}
&\sum_{j\in\Mc}\sum_{b\in\Bc}(x_{ijb} + x_{jib}) \ge 1,~i\in\Mc.\label{ctr:redunX}
\end{align}
Also, the OFDMA system requires that one RB can only be assigned to at most one uplink (or downlink) UE. Hence, the TDA variable $\alphab$ must satisfy
 \begin{align}
&\sum_{i\in\Mc}\alpha_i \le B ~\text{ and } \sum_{i\in\Mc} (1-\alpha_i) \le B.\label{ctr:redunA}
\end{align}

By incorporating \eqref{relaxation}, \eqref{ctr:redunX} and \eqref{ctr:redunA} into \eqref{pro:sample_avg mmf Xa}, we have
\begin{subequations}\label{pro:P1epi}
\begin{align}
\max_{(\Xb,\alphab,\taub)}  ~&\frac{1}{T}\sum_{t\in\Tc} \tau_t\\
{\rm s.t.}~& R_{i}(\xi^t) \geq \tau_t \gamma_i,~ i\in\Mc,~t\in\Tc, \\
& (\Xb,\alphab)\in \tilde \Qc,
\end{align}
\end{subequations}
where $\tilde \Qc$ contains the constraints in \eqref{ctr:P_1pair}, \eqref{ctr:P_alpi}, \eqref{ctr:P_alpj}, \eqref{relaxation}, \eqref{ctr:redunX} and \eqref{ctr:redunA}.
The above relaxed problem \eqref{pro:P1epi} is a linear program which can be efficiently solved by the off-the-shelf solvers. Once problem \eqref{pro:P1epi} is solved, the binary solutions may be obtained by simply rounding
$(\Xb,\alphab)$ back to the binary set. Specifically, due to \eqref{ctr:P_x_01}, for each RB $b$, one may set the largest value in $\Xb_b=\{x_{ijb}~|~i,~j\in\Mc\}$ to one and others to zero, while round the elements of $\alphab$ towards the nearest integer. Such method is referred to as simple relaxation (SR).

Unfortunately, the SR methods are likely to yield infeasible solutions. To illustrate this, we consider a simulation instance of \eqref{pro:P1epi} with $M=3$ and $B=2$, which has an optimal solution given by
\begin{align}
\alphab
&=
\begin{pmatrix}
0.65&0.3&0.35
\end{pmatrix};\label{xalpha}\\
\Xb_1
&=
{\setlength{\arraycolsep}{2.5pt}
\begin{pmatrix}
0&0      &0.20\\
0&0      &0.30\\
0.35&0.15&0
\end{pmatrix};
\Xb_2
=
\begin{pmatrix}
0   &0   &0.37\\
0   &0   &0.24\\
0.08&0.31&0
\end{pmatrix}.}\label{xalpha2}
\end{align}
\begin{itemize}
  \item  \underline{Violation of HD transmission constraint} \eqref{ctr:P_alpi} or \eqref{ctr:P_alpj}.

	Once $(\Xb,\alphab)$ in \eqref{xalpha},\eqref{xalpha2} is rounded, we obtain
    \begin{align}
	\alphab &=
	\begin{pmatrix}
	1& 0& 0
	\end{pmatrix};\nonumber\\
	\Xb_1 &=
	\begin{pmatrix}
	~0~&~0~&~0~\\
	~0~&~0~&~0~\\
	~1~&~0~&~0~
	\end{pmatrix};
    \Xb_2 =
	\begin{pmatrix}
	~0~&~0~&~1~\\
	~0~&~0~&~0~\\
	~0~&~0~&~0~
	\end{pmatrix}.\nonumber
	\end{align}
  As seen, the rounded solution violates \eqref{ctr:P_alpi} or \eqref{ctr:P_alpj} for $i=1$ and $i=3$.
  That is, UE 3 and UE 1 are assigned for downlink and uplink transmissions on RB 1 but the transmission directions are reversed on RB 2, which implies that the HD transmission constraint of UEs is violated. In fact, the violation probability of this constraint is quite high. As shown in Table \ref{tab:infea_NFL} (see column 2), for $M=8$ and $B\geq 16$, the probability that constraint \eqref{ctr:P_alpi} or \eqref{ctr:P_alpj} is violated is more than 85\% for the SR method.
  \end{itemize}
\begin{itemize}
  \item \underline{Violation of pairing conservation constraint} \eqref{ctr:redunX}.
	The above example shows that UE $2$ is never paired, which thus violates \eqref{ctr:redunX} and yields a zero MMF rate. As shown in Table \ref{tab:infea_NFL} (see column 5), there is a high probability that \eqref{ctr:redunX} is violated but it decreases when $B \geq M$.
\end{itemize}
\begin{table}[h]
	\centering
    \linespread{1}
	\caption{Comparison of the percentage of the solutions by the three algorithms (i.e., SR, 2S-SR and 2S-SRGR) which violating HD transmission constraints \eqref{ctr:P_alpi}, \eqref{ctr:P_alpj} or pairing conservation constraint \eqref{ctr:redunX}. The results are averaged over 200 channel realizations when $M=8$.}
	\label{tab:infea_NFL}	\begin{tabular}{|l|m{3.7em}m{3.6em}m{3.7em}|m{2.8em}m{2.8em}m{2.8em}|}
		\hline
		\multirow{2}{*}{$B$} & \multicolumn{3}{c|}{Violation of Constraint \eqref{ctr:P_alpi} or \eqref{ctr:P_alpj}} & \multicolumn{3}{c|}{Violation of Constraint \eqref{ctr:redunX}} \\
		\cline{2-7}
		&~SR & \hspace{-.7em}2S-SR & \hspace{-1.5em}2S-SRGR &~SR & \hspace{-.7em}2S-SR & \hspace{-1.5em}2S-SRGR \\
		\hline
		$4$  & 26\% & 0\%   & 0\%     & 98\% & 97\% & 45\%    \\
		$8$  & 68\% & 0\%   & 0\%     & 77\% & 88\% & 0\%     \\
		$16$ & 88\% & 0\%   & 0\%     & 49\% & 74\% & 0\%     \\
		$32$ & 96\% & 0\%   & 0\%     & 34\% & 56\% & 0\%     \\
		$64$ & 97\% & 0\%   & 0\%     & 26\% & 49\% & 0\%     \\
		\hline
	\end{tabular}
\vspace{-2em}
\end{table}

\subsection{Two-Stage Relaxation and Greedy Rounding}\label{subsec:SR_2stageSR}
We propose two heuristic strategies to overcome the infeasibility issue mentioned above.
Specifically, the reason for violating the HD transmission constraint \eqref{ctr:P_alpi} or \eqref{ctr:P_alpj} in the SR method is that the UP variables $\Xb$ and the TDA variables $\alphab$ are rounded independently. To fix this problem, we consider a \emph{two-stage} approach: in the first stage, we solve the relaxed problem \eqref{pro:P1epi}, and only round the TDA variables $\alphab$, denoted by $\alphab^\star$; in the second stage, we solve the relaxed problem \eqref{pro:P1epi} again but with the TDA variables being fixed to $\alphab^\star$ and with only the UP variables $\Xb$ as optimization variables, i.e., we solve the following problem in the second stage
\begin{subequations}\label{pro:P1epi fixed alpha}
	\begin{align}
	\max_{\Xb,\taub} ~&\frac{1}{T}\sum_{t\in\Tc} \tau_t\\
	{\rm s.t.}~& R_{i}(\xi^t) \geq \tau_t \gamma_i,~i\in\Mc,~t\in\Tc, \\
	& (\Xb,\alphab^\star)\in \tilde \Qc.
	\end{align}
\end{subequations}
The obtained solution from \eqref{pro:P1epi fixed alpha}, denoted by $\hat\Xb$, is then rounded by assigning the largest element in $\hat\Xb_b$ to one and others to zero for every $b\in \Bc$. Note that with the TDA variables $\alphab$ being fixed, the HD transmission constraint \eqref{ctr:P_alpi} or \eqref{ctr:P_alpj} must be satisfied for $\Xb$ in the second stage. We refer to this two-stage method as two-stage SR (2S-SR). As one can observe from Table \ref{tab:infea_NFL} (see column 3), the constrains \eqref{ctr:P_alpi} and \eqref{ctr:P_alpj} are always satisfied for the 2S-SR method.

However, the two-stage approach cannot guarantee that every UE is properly allocated to the RBs. Table \ref{tab:infea_NFL} (see column 6) shows that the solution returned by the 2S-SR method still violates constraint \eqref{ctr:redunX} with high probability. To fix this problem, we employ an iterative greedy rounding (GR) strategy in the second stage.
Specifically, unlike the 2S-SR method where all $\Xb_b,~b\in\Bc$ are rounded at once, in the GR strategy, we only round one $\Xb_b$ at a time. Let $\hat x_{i^\star j^\star b^\star}$ be the largest element in $\hat \Xb$ obtained by solving problem \eqref{pro:P1epi fixed alpha}. Then we set
$\hat x_{i^\star j^\star b^\star}=1$ and all the others $\hat x_{i j b^\star}=0$. The rounded $\hat \Xb_{b^\star}$ is then fixed in problem \eqref{pro:P1epi fixed alpha} and we solve the new problem (i.e., \eqref{pro:P1epi fixed alpha GR}) again with the rest UP variables. The above procedure is repeated until all RBs are visited. The details of the GR procedure is shown in line 3 to line 8 in Algorithm \ref{alg:GR}. We refer to the two-stage SR method with the iterative greedy rounding strategy as 2S-SRGR.
As shown in Table \ref{tab:infea_NFL} (column 7), the numerical results indicate that the 2S-SRGR method can always reduce the probability of violating constraint \eqref{ctr:redunX} to zero for the evaluated scenarios with $M< 2B$.
\setlength{\textfloatsep}{.5em}
\begin{algorithm}
\caption{Two-stage SR method with iterative greedy rounding strategy (2S-SRGR).}
\label{alg:GR}
\begin{algorithmic}[1]
    \STATE {\bf Stage One:} Solve the linear program \eqref{pro:P1epi} and obtain rounded $\alphab^\star$;
    \STATE {\bf Stage Two:} Set  $\tilde\Bc = \emptyset$;
        \REPEAT
            \STATE Solve the following linear program
            \begin{subequations}
            \label{pro:P1epi fixed alpha GR}
            	\begin{align}
            	\max_{\substack{\Xb_b,\forall b\in \Bc \backslash \tilde \Bc, \\\taub}}  ~&\frac{1}{T}\sum_{t\in\Tc} \tau_t\\
            	{\rm s.t.}~& R_{i}(\xi^t) \geq \tau_t \gamma_i,~i\in\Mc,~t\in\Tc, \\
            	& (\Xb,\alphab^\star)\in \tilde \Qc,
            	\end{align}
            \end{subequations}
          and obtain $\hat \Xb_b$, $b \in  \Bc \backslash \tilde \Bc$.
            \STATE Let $(i^\star,~j^\star,~b^\star)=
            \argmax\limits_{i,j\in\Mc,b\in\Bc\setminus\tilde\Bc}~~{\hat x_{ijb}}$;
            \STATE Obtain $\hat \Xb_{b^\star}^\star$ by setting $\hat x_{i^\star j^\star b^\star}=1$ and
            $\hat x_{ijb^\star}=0$ for all $(i,j)\neq (i^\star,j^\star)$;
            \STATE Update $\tilde\Bc  \leftarrow $ $b^\star \cup \tilde\Bc$;
        \UNTIL $\Bc = \tilde\Bc$;
    \STATE {\bf Output} $\hat \Xb_b^\star$, $b\in \Bc$ and $\alphab^\star$.
\end{algorithmic}
\end{algorithm}

\vspace{-1.6em}
\subsection{Tightening the Relaxation by Iterative Reweighted Minimization}\label{subsec:IRM}

As observed from Table \ref{tab:infea_NFL}, for $M=8$ and $B=4$, the probability that the solution returned by the 2S-SRGR method violates constraint \eqref{ctr:redunX} is still 45\%.
This is because for the full load case (i.e., when $M=2B$), each UE has to exactly occupy one RB; otherwise the MMF rate is zero. The complexity analysis in Section \ref{sec:NPhard} also suggests that the full load case is the most difficult case to handle. To overcome this issue, we propose to employ the $\ell_q$-norm regularization and the iterative reweighted minimization (IRM) method in \cite{LiuSIAM16,Liu2015} to tighten the relaxation in \eqref{relaxation}.

Specifically, according to \cite{LiuSIAM16,Liu2015}, the OFDMA constraints \eqref{ctr:P_1pair} and \eqref{ctr:P_x_01} can be well approximated by the simple relaxation \eqref{relaxation} plus a proper
$\ell_q-$norm regularization on $\Xb$. For the TDA variable $\alphab$, as $\alpha_i+(1-\alpha_i)=1,~\text{for all}~i\in\Mc$ always hold true, one can also approximate the binary constraint \eqref{ctr:P_alp_01} by \eqref{relaxation} and an
$\ell_q-$norm regularization of $\alphab$. Therefore, for the full load case, we propose to replace problem \eqref{pro:P1epi} in the first stage of Algorithm \ref{alg:GR} by the following regularized problem
\begin{subequations}\label{pro:P1epi lq}
	\begin{align}
	\max_{(\Xb,\alphab,\taub)}  ~&\frac{1}{T}\sum_{t\in\Tc} \tau_t - \rho_1 \|\Xb+\epsilon_1\|_q^q \nonumber\\
&\qquad\quad~ -\rho_2 (\|\alphab+\epsilon_2\|_q^q+\|\oneb-\alphab+\epsilon_2\|_q^q)\\
	{\rm s.t.}~& R_{i}(\xi^t) \geq \tau_t \gamma_i,~i\in\Mc,~t\in\Tc, \\
	& (\Xb,\alphab)\in \tilde \Qc,
	\end{align}
\end{subequations}
where
\begin{align}
  \|\Xb+\epsilon_1\|_q^q &\triangleq \sum_{i\in\Mc}\sum_{ j\in\Mc}\sum_{b\in\Bc}(x_{ijb}+\epsilon_1)^q, \nonumber\\
  \|\alphab+\epsilon_2\|_q^q+&\|\oneb-\alphab+\epsilon_2\|_q^q \nonumber\\ & \triangleq \sum_{i\in\Mc}\big((\alpha_i+\epsilon_2)^q + (1-\alpha_i+\epsilon_2)^q\big),\nonumber
\end{align}
$\rho_1\ge 0$, $\rho_2\ge 0$, $q\in(0,1)$, $\epsilon_1>0$ and $\epsilon_2>0$ are some parameters, and $\oneb$ is the all-one vector. It has been shown in \cite{Liu2015,LiuSIAM16} that for sufficiently large $\rho_1$ and $\rho_2$, \eqref{pro:P1epi lq} can asymptotically have the same optimal solution as \eqref{pro:P1epi}. Analogously, in the second stage of Algorithm \ref{alg:GR}, problem \eqref{pro:P1epi fixed alpha GR} is replaced by the following regularized problem
\begin{subequations}\label{pro:P1epi fixed alpha GR lq}
	\begin{align}
	\max_{\substack{\Xb_b,\forall b\in \Bc \backslash \tilde \Bc, \\\taub}}  ~&\frac{1}{T}\sum_{t\in\Tc} \tau_t-\rho_1 \|\Xb+\epsilon_1\|_q^q\\
	{\rm s.t.}~& R_{i}(\xi^t) \geq \tau_t \gamma_i,~i\in\Mc,~t\in\Tc, \\
	& (\Xb,\alphab^\star)\in \tilde \Qc,
	\end{align}
\end{subequations}
where $\alphab^\star$ is given by the rounded solution of \eqref{pro:P1epi lq}.

\setlength{\textfloatsep}{.5em}
\begin{algorithm}
\caption{IRM Algorithm for Solving Problem \eqref{pro:P1epi lq}}
\label{alg:FP_1sts_IRM}
\begin{algorithmic}[1]
    \STATE Given $q\in(0,1)$, $\rho_1\ge0$, $\rho_2\ge0$, $\epsilon_1\in(0,1)$,
    $\epsilon_2\in(0,1)$, $\sigma_1\in(0,1)$, $\sigma_2\in(0,1)$ and $\kappa>1$;
    \STATE Given a feasible solution $\alphab^{(0)}$, $\Xb^{(0)}$ obtained by solving \eqref{pro:P1epi};
    \STATE Set
    $\Wb^{(1)}=\left(\Xb^{(0)}+\epsilon_1\right)^{\circ(q-1)}$, \\ \quad~~$\wb^{a(1)}=\left(\alphab^{(0)}+\epsilon_2\right)^{\circ(q-1)}$,\\
    \quad~~$\wb^{b(1)}=\left(\oneb-\alphab^{(0)}+\epsilon_2\right)^{\circ(q-1)}$;
    \REPEAT
      \STATE Set $r = 1$;
      \REPEAT
        \STATE Obtain $\taub^{(r)}$, $\alphab^{(r)}$ and $\Xb^{(r)}$  by solving \eqref{pro:IRM1};
        \STATE Update
               $\Wb^{(r+1)}=\left(\Xb^{(r)}+\epsilon_1\right)^{\circ(q-1)}$,\\
               \qquad~~~
               $\wb^{a(r+1)}=\left(\alphab^{(r)}+\epsilon_2\right)^{\circ(q-1)}$,\\
               \qquad~~~
               $\wb^{b(r+1)}=\left(\oneb-\alphab^{(r)}+\epsilon_2\right)^{\circ(q-1)}$;
        \STATE $r \leftarrow r+1$;
      \UNTIL
             $\left\|\Xb^{(r)}-\Xb^{(r-1)}\right\|_1\le \sigma_1$ and \\ \qquad~$\left\|\alphab^{(r)}-\alphab^{(r-1)}\right\|_1\le \sigma_2$;
      \STATE $\rho_1\leftarrow \kappa\rho_1$,
             $\rho_2\leftarrow \kappa\rho_2$,
             $\epsilon_1\leftarrow\epsilon_1/\kappa$,
             $\epsilon_2\leftarrow\epsilon_2/\kappa$;
    \UNTIL $\max_{i\in\Mc}\min\{\alpha_i,1-\alpha_i\} \le \sigma_2$.
    \end{algorithmic}
\end{algorithm}
The $\ell_q$-norm regularized problem \eqref{pro:P1epi lq} and \eqref{pro:P1epi fixed alpha GR lq} are non-convex and difficult to solve in general. We apply the IRM algorithm in \cite{Candes2008} to solve the two problems. The IRM algorithm iteratively approximates the non-convex $\ell_q$-norm term by its first-order approximation. In particular, by taking \eqref{pro:P1epi lq} as the example, we solve the following problem at the $r$-th iteration of the IRM algorithm:
\begin{subequations}\label{pro:IRM1}
	\begin{align}
	\max_{(\Xb,\alphab,\taub)}
	~&\frac{1}{T}\sum_{t\in\Tc}\tau_t- \rho_1 q~ \left\|\Wb^{(r)}\circ\Xb\right\|_{\ell_1}\nonumber\\
	&-\rho_2 q\left\|(\wb^{a(r)}\circ\alphab + \wb^{b(r)}\circ(\oneb-\alphab))\right\|_1\\
	{\rm s.t.}~& R_{i}(\xi^t) \geq \tau_t \gamma_i,~i\in\Mc,~t\in\Tc, \\
& (\Xb,\alphab)\in \tilde\Qc,
	\end{align}
\end{subequations}
where $\Ab\circ\Bb$ is the Hadamard product of two tensors,
$\left\|\Ab\right\|_{\ell_1}$ is denoted as the summation of all elements of $\Ab$,
$\Ab^{\circ b}$ is the Hadamard power, $\Wb^{(r)}\triangleq\left(\Xb^{(r-1)}+ \epsilon_1\right)^{\circ(q-1)}$,
$\wb^{a(r)}\triangleq \left(\alphab^{(r-1)}+\epsilon_2\right)^{\circ(q-1)}$,
$\wb^{b(r)}\triangleq \left(\oneb-\alphab^{(r-1)}+\epsilon_2\right)^{\circ(q-1)}$,
$\Xb^{(r-1)}$ and
$\alphab^{(r-1)}$
respectively denote the value of the variables $\Xb$ and $\alphab$ at the $(r-1)$-th iteration.
Note that problem \eqref{pro:IRM1} is a linear program and thus can be efficiently solved. To be self-contained, we list the details of the IRM algorithm for solving problem \eqref{pro:P1epi lq} in Algorithm \ref{alg:FP_1sts_IRM}. The two-stage algorithm using $\ell_q$-norm regularization, and the IRM algorithm is referred to as 2S-IRMGR and is summarized in Algorithm \ref{alg:2S-IRMGR}. As will be shown in Section \ref{sec:sim}, the 2S-IRMGR method can effectively improve the MMF rate performance for the full load case.

\setlength{\textfloatsep}{.5em}
\vspace{-.3em}
\begin{algorithm}
\caption{ Two-Stage IRM method with iterative greedy rounding strategy (2S-IRMGR) }
\label{alg:2S-IRMGR}
\begin{algorithmic}[1]
    \STATE {\bf Stage One:} Solve the $\ell_q$-norm problem \eqref{pro:P1epi lq} by the IRM algorithm (e.g., Algorithm \ref{alg:FP_1sts_IRM}) and obtain rounded $\alphab^\star$;
\STATE {\bf Stage Two:} {\bf Set}  $\tilde\Bc = \emptyset$
\REPEAT
\STATE Solve the $\ell_q$-norm problem \eqref{pro:P1epi fixed alpha GR lq} by the IRM algorithm
and obtain $\hat \Xb_b$, $b \in  \Bc \backslash \tilde \Bc$.
\STATE Let $(i^\star,~j^\star,~b^\star)=
\argmax\limits_{i,j\in\Mc,b\in\Bc\setminus\tilde\Bc}~~{\hat x_{ijb}}$;
\STATE Obtain $\hat \Xb_{b^\star}^\star$ by setting $\hat x_{i^\star j^\star b^\star}=1$ and
$\hat x_{ijb^\star}=0$ for all $(i,j)\neq (i^\star,j^\star)$;
\STATE Update $\tilde\Bc  \leftarrow $ $b^\star \cup \tilde\Bc$;
\UNTIL $\Bc = \tilde\Bc$;
\STATE {\bf Output} $\hat \Xb_b^\star$, $b\in \Bc$ and $\alphab^\star$.
\end{algorithmic}
\end{algorithm}

\vspace{-1.5em}
\subsection{Optimization of Power Allocation}\label{sec:WMMSE}
The previous three subsections have focused on the optimization of the TDA and UP variables $(\Xb,\alphab)$ in problem \eqref{pro:sample_avg mmf} and assumed that the power allocation variables $\{(\pb^{d}(\xi^t), \pb^{u}(\xi^t)),~\forall~ \xi^t\}$ are fixed.
In this subsection, we consider the optimization of $\{(\pb^{d}(\xi^t), \pb^{u}(\xi^t)),~\forall~ \xi^t\}$ by assuming that $(\Xb,\alphab)$ are given. With $(\Xb,\alphab)$ being given, denoted by $(\Xb^\star,\alphab^\star)$, problem \eqref{pro:sample_avg mmf}
reduces to $T$ independent problems. Specifically, for each $t=1,\ldots,T$, we solve
\begin{subequations}\label{pro:P1epi fixed Xalpha}
	\begin{align}
	\max_{\substack{\pb^{d}\in \Pc^d,\\\pb^{u}\in \Pc^u}} ~& \tau_t\\
	{\rm s.t.} & \sum_{j\in\Mc}\sum_{b\in\Bc}
	x_{ijb}^\star R_{ijb}^{d}(\xi^t) \geq \tau_t \gamma_i, ~i\in\Mc \cap \{i|\alpha_i^\star=1\},\\
	&\sum_{j\in\Mc}\sum_{b\in\Bc}
	 x_{jib}^\star R_{jib}^{u}(\xi^t)\geq \tau_t \gamma_i, ~i\in\Mc\cap \{i|\alpha_i^\star=0\},
	\end{align}
\end{subequations}
where $R_{ijb}^{d}(\xi^t)$ and $R_{jib}^{u}(\xi^t)$ are given in \eqref{equ:Thrputd} and \eqref{equ:Thrputu}, respectively.
Problem \eqref{pro:P1epi fixed Xalpha} is a non-convex optimization problem. However, it can be efficiently handled by the existing SCA methods such as \cite{Razaviyayn2013b} and the weighted MMSE (WMMSE) based approaches \cite{Shi2011}.

According to AO, iterative updates of TDA and UP variables $(\Xb,\alphab)$ and power allocation variables $\{(\pb^{d}(\xi^t), \pb^{u}(\xi^t)),~\forall~ \xi^t\}$
by solving \eqref{pro:sample_avg mmf Xa} and \eqref{pro:P1epi fixed Xalpha} respectively can further improve the performance. However, it is found in the numerical experiments that the improvement is usually not significant and one round of updates of $(\Xb,\alphab)$ and $\{(\pb^{d}(\xi^t), \pb^{u}(\xi^t)),~\forall~ \xi^t\}$
is sufficient to obtain considerably good MMF rate performance, as will be verified in Section \ref{sec:sim}.

Once problem \eqref{pro:sample_avg mmf} is solved, we obtain the TDA and UP solution $(\Xb,\alphab)$ and employ it for the time interval $T$ regardless of the fast fadings. However, for each fading realization $\xi^{t'}, t'=1,\ldots,T,$ that occur in this time interval $T$, one can update and perform power control by solving problem \eqref{pro:P1epi fixed Xalpha} with respect to each $\xi^{t'}$. So the TDA and UP are performed and fixed in the large-time-scale $T$ while the power control is adapted according to the fast fadings. The performance of the two-time-scale formulation \eqref{pro:sample_avg mmf} and proposed algorithms are examined in the next section.

\section{Simulation Results}\label{sec:sim}
In this section, we evaluate the performance of the multi-user FD OFDMA system as described in Section \ref{sec:system}. Simulation parameters are set based on the 3GPP standard \cite{3GPP}.
We assume that the UEs are randomly and uniformly located within a circle centered at the BS with a radius $100$ m.
The maximum transmission power of all UEs $P_{\rm UE}$ is 23 dBm, and the total transmission power of the BS $P_{\rm BS}$ is $30$ dBm.
As for the channel model, the path loss component is given by $140.7+36.7\log_{10}(d)$ (dB), where $d$ (km) is the distance between the BS and the UE.
The small scale fading of all links are independently and identically generated following the complex Gaussian distribution with zero mean and unit variance.
At the BS, the residual SI channel gain $\eta_b$ is set to be $-110$ dB if not specified, and the noise power at the BS and UEs is set as $\sigma_0^2 = \sigma_i^2 = -90$ dBm for all $i\in\Mc$.
The number of channel samples in \eqref{pro:sample_avg mmf} is $T=100$.
All the results shown in this section were obtained by running 200 simulation trials. The parameters used in Algorithm \ref{alg:FP_1sts_IRM} and Algorithm \ref{alg:2S-IRMGR} are set as follows: $q=0.5$, $\rho_1 = \rho_2 = 1$, $\epsilon_1 =\epsilon_2 = 0.1$, $\sigma_1 = 10^{-3}$, $\sigma_2=0.1$, $\kappa=1.5$.

In addition to the four aforementioned methods, i.e., SR, 2S-SR, 2S-SRGR and 2S-IRMGR, for TDA and UP optimization, we also implement a heuristic algorithm following a similar idea as in \cite{Shen2013}. Specifically, for each UE $i$, we respectively compute the downlink rate $R_{ijb}^{d}$ and the uplink rate $R_{jib}^{u}$  (see \eqref{equ:Thrputd} and \eqref{equ:Thrputu}) for all $j\in \Mc$ and $b\in \Bc$, and obtain the average rates
$$\bar R_{i}^{d} = \sum_{j\in\Mc}\sum_{b\in\Bc}\frac{R_{ijb}^{d}}{(M-1)B}~\text{and}~
\bar R_{i}^{u} = \sum_{j\in\Mc}\sum_{b\in\Bc}\frac{R_{jib}^{u}}{(M-1)B}.$$
Then we assign UE $i$ to be a downlink UE (i.e., assign $\alpha_i=1$) if $\bar R_{i}^{d}\geq \bar R_{i}^{u}$ and an uplink UE (i.e., $\alpha_i=0$) otherwise. If the number of downlink UEs is larger than $B$, then we keep the TDA for the first $B$ downlink UEs that have larger values of $\bar R_{i}^{d}$ and change the rest to be uplink UEs. The same rule applies if the number of uplink UEs is larger than $B$. Given the two sets of uplink UEs and downlink UEs, we list all possible pairs of uplink and downlink UEs. From RB 1 to RB $B$, we select and allocate one pair of UEs to each of the RBs sequentially.
The selection criterion is based on the MMF rate among all the UEs that have not been allocated to at least one RB.
Specifically, for each RB, we choose the UE pair that contains an UE which has never been assigned to one RB but can increase the MMF rate most.
If all UEs have been assigned, then we simply choose the UE pair that can increase the MMF rate most. We refer to this method as ``Heuristic'' in the figures.

Furthermore, we also develop three algorithms modified from the UP algorithms A1-A3 in \cite{Alexand2016}.
Specifically, we add an extra criterion that the unpaired UEs always have higher priority to be assigned to the RB than those UEs who have already occupied another RB.
By this way, the algorithms can guarantee every UE a non-zero MMF rate. Before applying the UP algorithms, the TDA of the UEs are determined by the above Heuristic method.
We refer the new algorithms as ``Modified Algorithm A1 from \cite{Alexand2016}'',  ``Modified Algorithm A2 from \cite{Alexand2016}'' and  ``Modified Algorithm A3 from \cite{Alexand2016}''.

\subsection{Case of $M<2B$}
We first consider the case of $M<2B$, and examine the performance of TDA and UP optimization algorithms (SR, 2S-SR, 2S-SRGR, Heuristic and the modified algorithms A1-A3 from \cite{Alexand2016}).
The power allocation is fixed and set to uniform values as $p_{b}^d(t) = \frac{P_{\rm BS}}{B}$ and $p_{jb}^u(t) = \frac{P_{\rm UE}}{B}$ for all $i,j\in\Mc, b\in \Bc$ and $t\in\Tc$.
\begin{figure}
	\center
    \linespread{1}
	\subfloat[$M=8,~B=16$]
	{\label{fig2a}\includegraphics[width=0.39\textwidth]{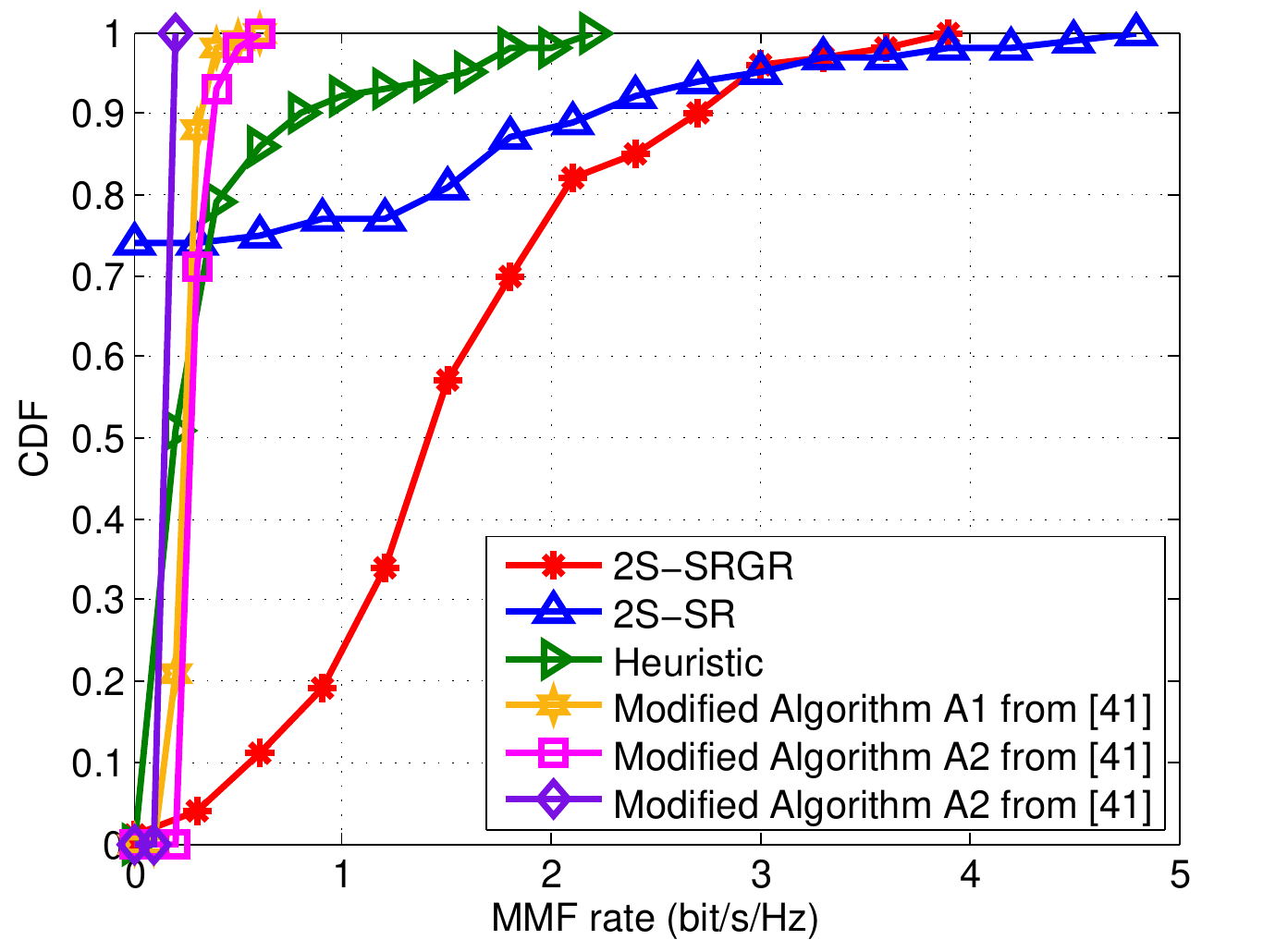}}\\
	\subfloat[$M=8,~B=64$]
	{\label{fig2b}\includegraphics[width=0.39\textwidth]{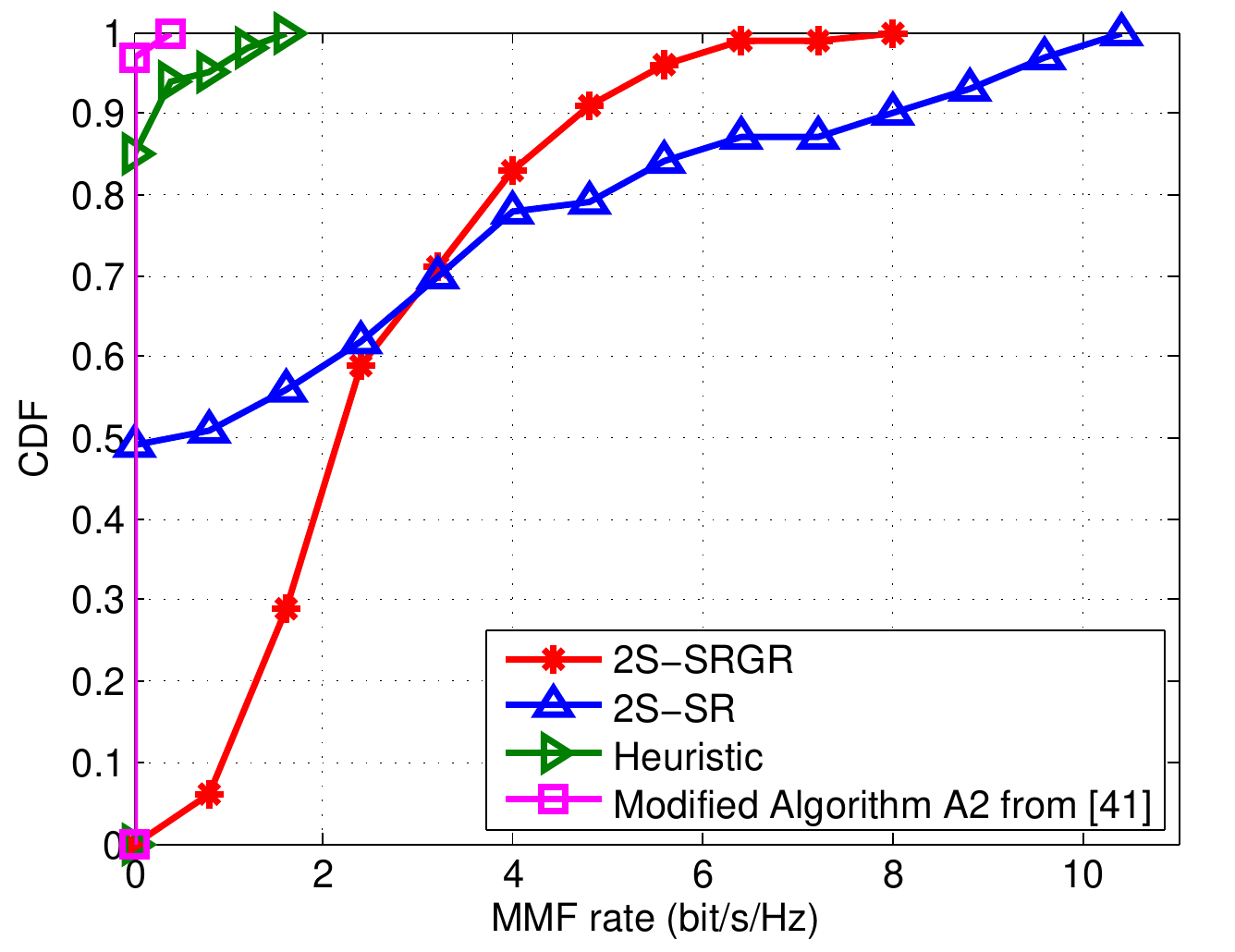}}
	\caption{CDFs of the MMF rates achieved by 2S-SR, 2S-SRGR, Heuristic, and the modified algorithms of A1-A3 from \cite{Alexand2016} in the non-full load case.}
	\label{fig:CDF_nfl}
\end{figure}
Fig. \ref{fig:CDF_nfl} shows the cumulative distribution functions (CDFs) of the MMF rates for $M=8$, $B=16$ (Fig. \ref{fig:CDF_nfl}(a)) and $M=8$, $B=64$ (Fig. \ref{fig:CDF_nfl}(b)). We do not show the results of SR since, as discussed in Section \ref{subsec:SR_SR}, the SR method almost cannot yield feasible solutions.
From both Fig. \ref{fig:CDF_nfl}(a) and Fig. \ref{fig:CDF_nfl}(b), one can observe that the 2S-SR method can still yield infeasible solutions, with probabilities higher than $74\%$ for the case of $B=16$ and $49\%$ for the case of $B=64$, respectively. In contrast, 2S-SRGR, Heuristic and the algorithms modified from A1-A3 in \cite{Alexand2016} can guarantee to provide feasible solutions with $100\%$ probability, while the 2S-SRGR method yields much higher MMF rates than Heuristic and the algorithms modified from A1-A3 in \cite{Alexand2016}. Specifically, for the case of $B=64$, 2S-SRGR achieves an MMF rate of 4.16 bit/s/Hz at the 80th percentile whereas Heuristic achieves an MMF rate of 0.17 bit/s/Hz only. The performance of the algorithms modified from A1-A3 in \cite{Alexand2016} is similar to Heuristic.
It can also be observed from Fig. \ref{fig:CDF_nfl}(a) and Fig. \ref{fig:CDF_nfl}(b) that the 2S-SR method may have a better chance to achieve a higher MMF rate than 2S-SRGR provided that the solution of 2S-SR is feasible. This is due to the reason that 2S-SRGR applies greedy rounding at each iteration.

\begin{figure}
    \linespread{1}
	\centering
    \subfloat[TDA and UP with uniform power allocation] {\includegraphics[width=0.4\textwidth]{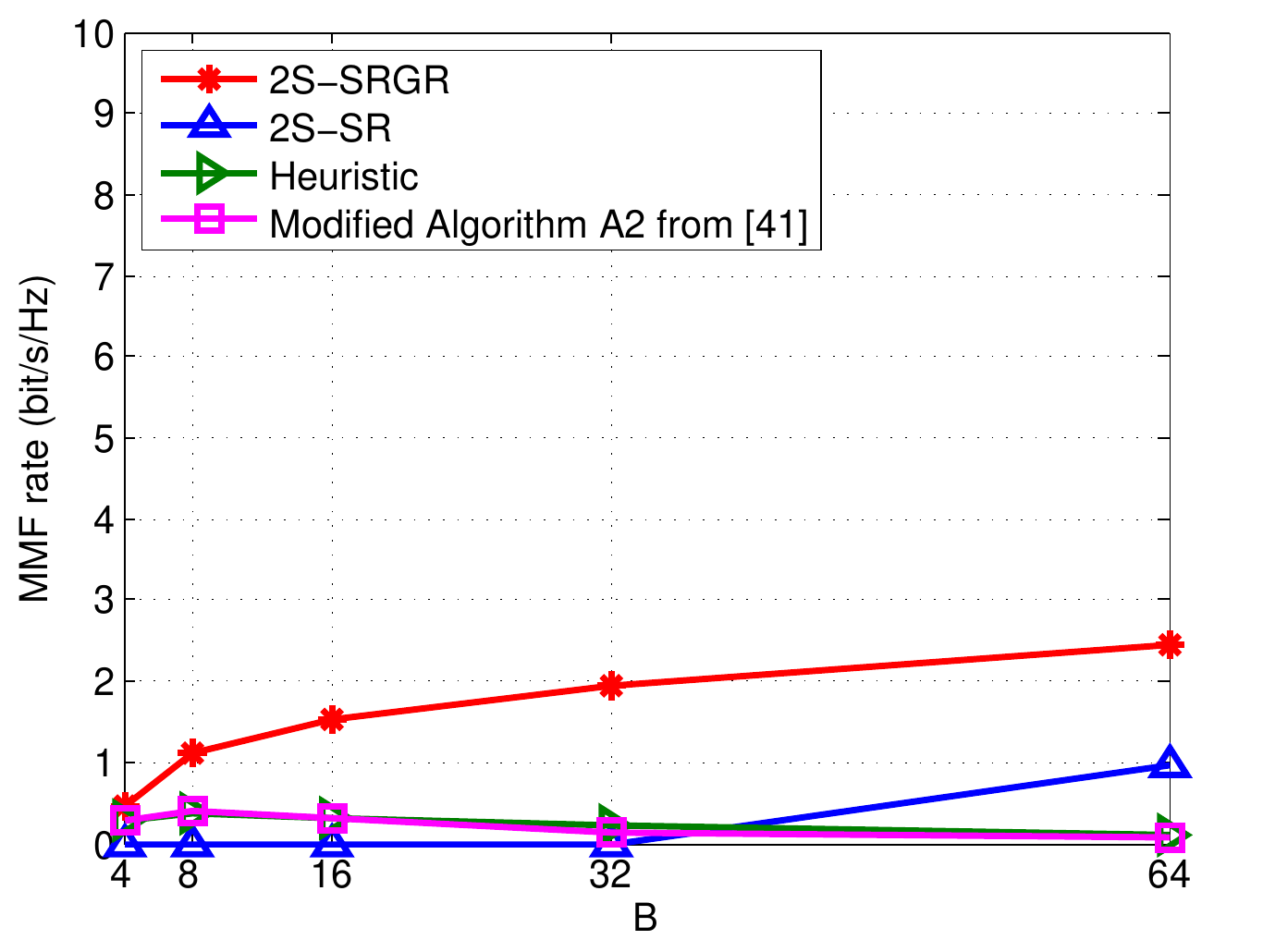}}\\
    \subfloat[TDA and UP with SCA power allocation]
	{\includegraphics[width=0.4\textwidth]{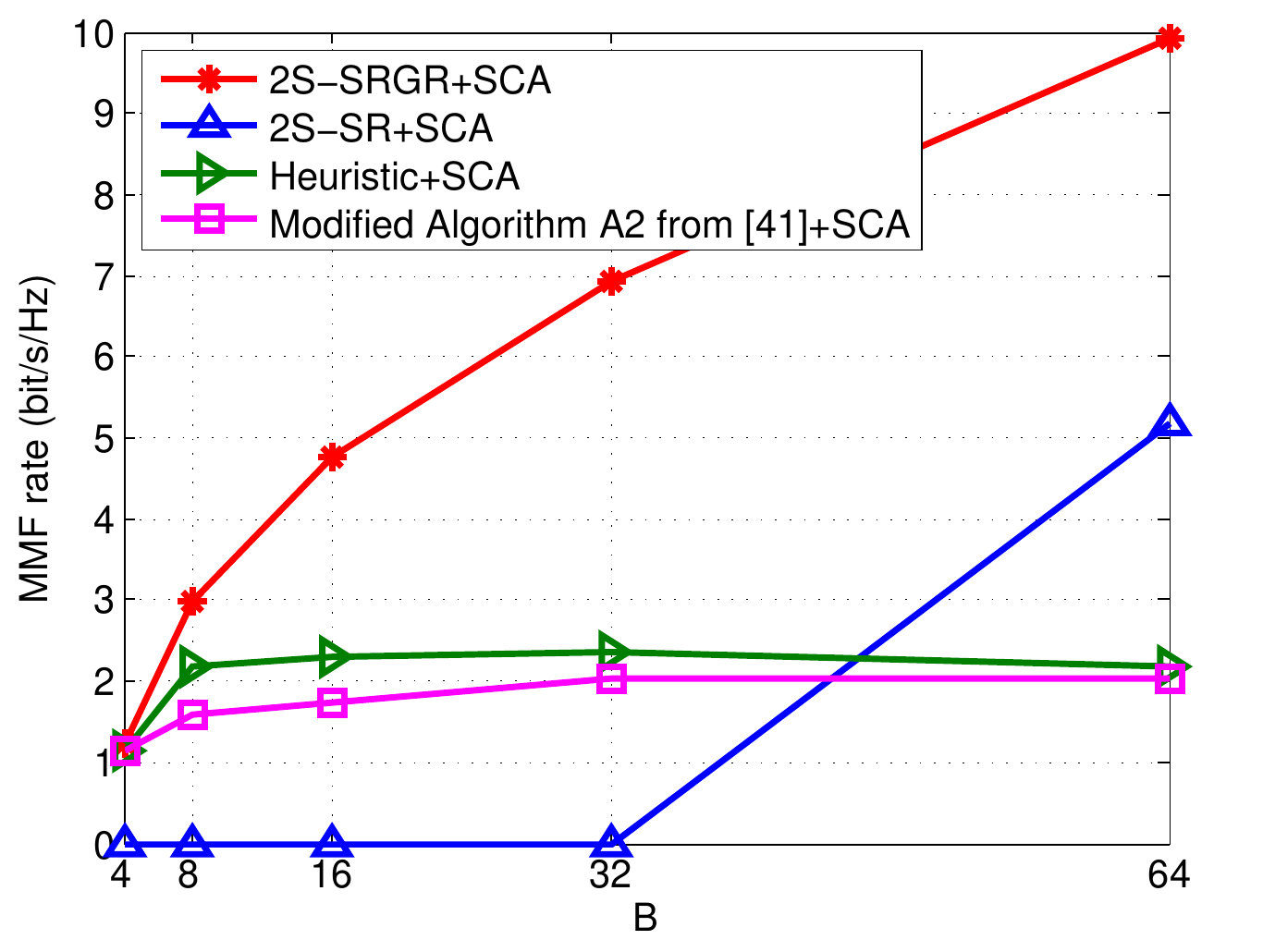}}\\
	\caption{The comparison of the MMF rates at the 50th percentile versus different RB number $B$: (a) 2S-SRGR, 2S-SR, Heuristic, and Modified Algorithm A2 from \cite{Alexand2016} are applied for TDA and UP with uniformed power allocation; (b) on the basis of (a), SCA is further applied for the power allocation.}
	\label{fig:mRate_nfl}
\end{figure}

Fig. \ref{fig:mRate_nfl} depicts the MMF rate versus the number of RBs at the 50th percentile, in which Fig. \ref{fig:mRate_nfl}(a) applies four methods for TDA and UP optimization with the uniformed power allocation and Fig. \ref{fig:mRate_nfl}(b) further allocates the power by SCA for the four methods. The number of UEs is 8 ($M=8$).
In Fig. \ref{fig:mRate_nfl}(a), the MMF rate of 2S-SRGR increases gradually as $B$ increases from $4$ to $64$; 2S-SR only gets non-zero 50th percentile MMF rate when $B=64$; the MMF rate returned by Heuristic and Modified Algorithm A2 from \cite{Alexand2016} are always slightly higher than zero. Note that the MMF rates of Heuristic and Modified Algorithm A2 from \cite{Alexand2016} goes up when $B\le 8$ and then goes down for $B>8$, which indicates that as $B$ increases, the benefit of allocating more RBs is gradually no longer evident given a fixed power budget at the BS.
Among the four methods, 2S-SRGR performs the best and much better than Heuristic and Modified Algorithm A2 from \cite{Alexand2016}. Specifically, the MMF rate of 2S-SRGR goes up from $0.45$ bit/s/Hz to $2.46$ bit/s/Hz as $B$ increases from  $4$ to $64$, which is much higher than that of Heuristic (always less than 0.39 bit/s/Hz) and Modified Algorithm A2 from \cite{Alexand2016} (always less than 0.41 bit/s/Hz). Also, the MMF rate of 2S-SRGR is almost 60\% higher than the one of 2S-SR when $M=8$, $B=64$.
In Fig. \ref{fig:mRate_nfl}(b), we omit the results obtained by applying the WMMSE method for the power allocation since as will be shown in Fig. \ref{fig:mRate_SIC}, the results by applying the WMMSE and the SCA methods are almost the same. Comparing with Fig. \ref{fig:mRate_nfl}(a), although the tendency of the four curves does not change, SCA has a fairly large MMF rate improvement, which shows the effectiveness of the power allocation.

\vspace{-1em}
\subsection{Case of $M=2B$}
Table \ref{tab:infea_FL} presents the infeasibility percentage of the six methods under the consideration in the full load scenario (i.e., $M=2B$). As seen, SR, 2S-SR and 2S-SRGR methods all exhibit high percentages to yield infeasible solutions, especially when $B$ is large (e.g, $B=8$).
In contrast, the 2S-IRMGR method based on $\ell_q$-norm regularization can effectively resolve the infeasibility issue. For example, the infeasibility percentage of 2S-IRMGR is always 0, whereas the ones of SR, 2S-SR and 2S-SRGR are at least 62\%, 30\% and 19\%, respectively, for $B=2$.
\begin{table*}
\linespread{1}
	\centering
    \caption{Infeasibility percentage of six algorithms in the full load case ($M=2B$).}
	\begin{tabular}{|c|c|c|c|c|c|c|c|}
		\hline
		Algorithms       &$B=2$ &$B=3$ &$B=4$ &$B=5$  &$B=6$  &$B=7$  &$B=8$\\
		\hline
		SR               &62\%&90\%&98\%&100\%&100\%&100\%&100\%\\
		2S-SR            &30\%&86\%&97\%&100\%&100\%&100\%&100\%\\
		2S-SRGR          &19\%&36\%&45\%&47\% &54\% &60\% &72\%\\
		2S-IRMGR/Heuristic/Modified
                         &\multirow{2}{*}{0\%}&\multirow{2}{*}{0\%}&\multirow{2}{*}{0\%}
                         &\multirow{2}{*}{0\%}&\multirow{2}{*}{0\%}&\multirow{2}{*}{0\%}
                         &\multirow{2}{*}{0\%}\\
         Algorithm from A2 in \cite{Alexand2016}
                         & & & & & & & \\
		\hline
	\end{tabular}
	\label{tab:infea_FL}
\end{table*}

\begin{figure}
	\centering
\linespread{1}
	\subfloat[$M=8,~B=4$]
	{\label{subfig2:cdf_M8B4}\includegraphics[width=.4\textwidth]{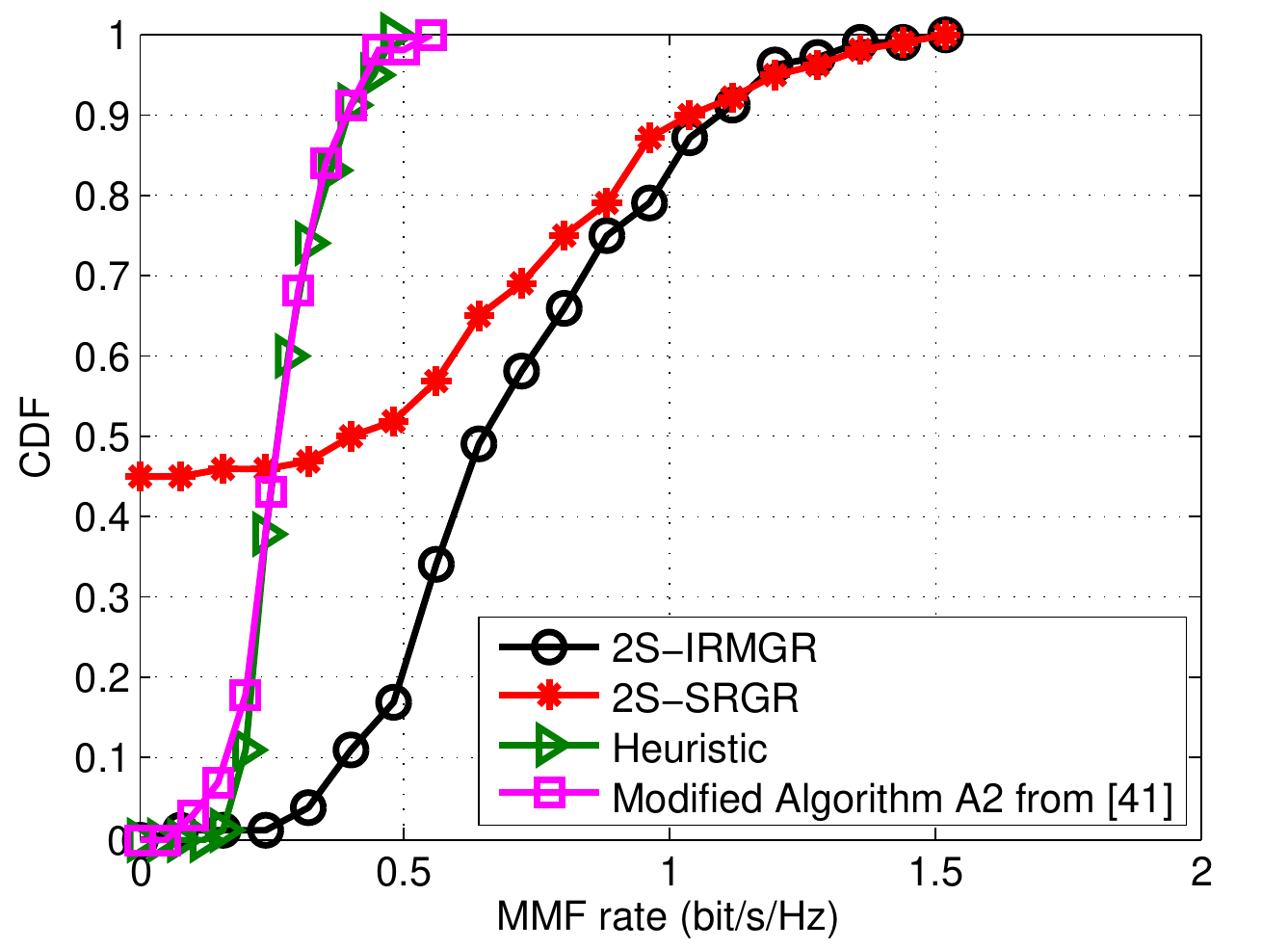}}\\
	\subfloat[$M=16,~B=8$]
	{\label{subfig2:cdf_M16B8}\includegraphics[width=.38\textwidth]{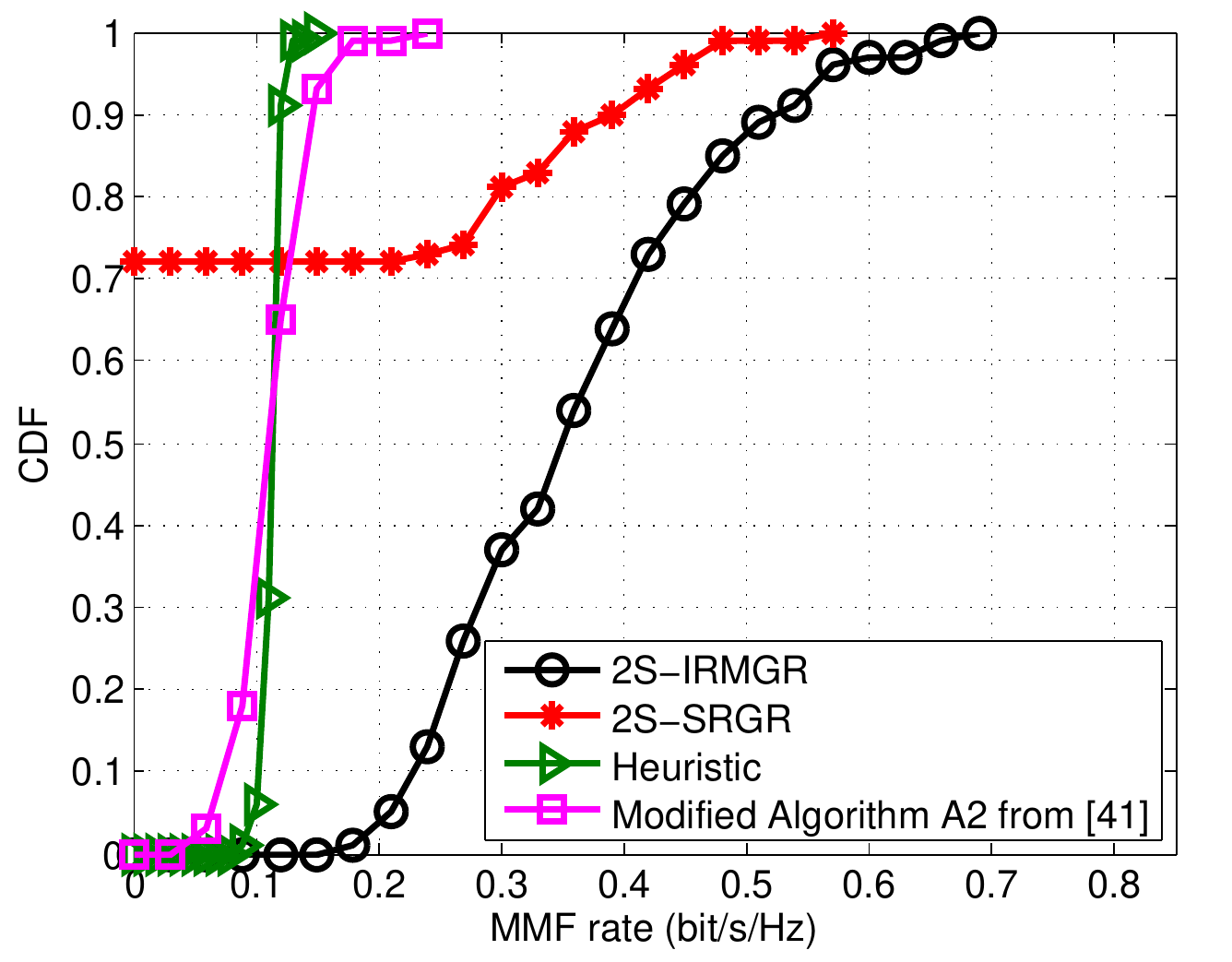}}
	\caption{CDFs of the MMF rates achieved by 2S-SRGR, 2S-IRMGR, Heuristic and Modified Algorithm A2 from \cite{Alexand2016} in the full-load case ($M=2B$).}
	\label{fig:CDF_fl}
\end{figure}

In Fig. \ref{fig:CDF_fl}, we display the CDFs of the average MMF rates achieved by 2S-SRGR, 2S-IRMGR, Heuristic and Modified Algorithm A2 from \cite{Alexand2016} in the full load case for $M=8$, $B=4$ (Fig. \ref{fig:CDF_fl}(a)) and $M=16$, $B=8$ (Fig. \ref{fig:CDF_fl}(b)). Again, we do not show the curves of SR and 2S-SR due to the high infeasibility percentage.
From Fig. \ref{fig:CDF_fl}(a) and Fig. \ref{fig:CDF_fl}(b), we can observe the consistent results that 2S-SRGR is likely to be infeasible with probabilities 45\% and 72\%, respectively.
In contrast, 2S-IRMGR, Heuristic and Modified Algorithm A2 from \cite{Alexand2016} can always guarantee a non-zero MMF rate. Comparing these three methods, the MMF rates of Heuristic and Modified Algorithm A2 from \cite{Alexand2016} are very close to each other, while 2S-IRMGR exhibits much higher MMF rate results. For example, in Fig. \ref{fig:CDF_fl}(a), among all channel realizations, the highest MMF rate achieved by Heuristic is 0.49 bit/s/Hz whereas more than 80\% MMF rates achieved by 2S-IRMGR are larger than 0.49 bit/s/Hz and the largest MMF rate achieved by 2S-IRMGR is 1.54 bit/s/Hz.
Also, in Fig. \ref{fig:CDF_fl}(b), the curve of 2S-IRMGR is always at the right bottom of the curve of 2S-SRGR, which means that the MMF rate of 2S-IRMGR is always higher than that of 2S-SRGR under the same CDF. This shows that 2S-IRMGR can guarantee every UE a non-zero MMF rate and does so without a high performance loss, which is unlike 2S-SRGR in the non-full load case.

\subsection{Computational Time}
In Table \ref{tab:complexity}, we present the computational time of Heuristic, 2S-SRGR, 2S-IRMGR and the off-the-shelf software CPLEX\cite{CPLEX}. The simulation is conducted on a two-core Intel i7-7500U CPU laptop. The first four rows show the results for the non-full load case with the number of the UEs $M=4$. Among the three algorithms, Heuristic consumes the least computational time but with the highest performance degradation (from 16.97 bit/s/Hz to 4.19 bit/s/Hz for $B=32$) compared to the optimal MMF rate obtained by CPLEX. 2S-SRGR is significantly more efficient than CPLEX for $B\ge16$ with about $30\%$ MMF rate loss. The last four rows displays the computational time for the full load case. As can be seen, a similar conclusion can be drawn.

\begin{table*}
\linespread{1}
	\centering
    \caption{Computational time and the MMF rate at the 50th percentile for the non-full and the full load case.}
	\begin{tabular}{|c||c|c|c||c|c|c|}
		\hline
		\multirow{1}{*}{}&\multicolumn{3}{c||}{Computational Time (s)} &\multicolumn{3}{c|}{MMF rate at the 50th percentile (bit/s/Hz)}\\
                         \cline{2-7}
        $M=4$          &Heuristic &2S-SRGR &CPLEX  &Heuristic &2S-SRGR &CPLEX\\
		\hline
        $B=4$       &$7.72\times10^{-3}$  &1.97   &0.34
                    &1.24                 &2.14   &3.02\\
		$B=8$       &$10.26\times10^{-3}$ &4.74   &0.95
                    &2.22                 &4.62   &6.53\\
        $B=16$      &$13.51\times10^{-3}$ &12.88  &19.95
                    &2.90                 &8.66   &11.21\\
        $B=32$      &$24.32\times10^{-3}$ &35.78  &$6.06\times10^3$
                    &4.19                 &10.99  &16.97\\
		\hline\hline
                         \cline{3-7}
       $M=2B$            &Heuristic &2S-IRMGR &CPLEX  &Heuristic &2S-IRMGR &CPLEX\\
		\hline
		$B=2$      &$7.42\times10^{-3}$  &1.04             &0.36
                   &0.55                 &1.21             &1.26\\
		$B=4$      &$10.28\times10^{-3}$ &19.28            &51.75
                   &0.28                 &0.70             &0.80\\
        $B=6$      &$13.59\times10^{-3}$ &90.66            &$1.62\times10^4$
                   &0.16                 &0.50             &0.64\\
        $B=8$      &$19.72\times10^{-3}$ &$1.36\times10^3$ &$3.72\times10^4$
                   &0.11                 &0.36             &0.49\\
        \hline
	\end{tabular}
	\label{tab:complexity}
\vspace{-1em}
\end{table*}

\subsection{Impact of Residual SI Channel Gain}
Fig. \ref{fig:mRate_SIC} depicts the statistical effect of the residual SI channel gain $\eta_b$ on the MMF rate in the case $M=8$, $B=64$. Three TDA and UP optimization algorithms (2S-SRGR, 2S-SR and Heuristic) followed by two power allocation algorithms (SCA and WMMSE) are compared. As can be seen from the figure, the MMF rates of all methods decrease as the residual SI channel gain increases.
Specifically, for the 2S-SRGR+WMMSE method, the MMF rate suffers a $99\%$ performance loss (from 8.90 bit/s/Hz to 0.02 bit/s/Hz) when $\eta_b$ changes from $-110$dB to $-60$dB.
Now, we focus on the comparison of different methods with the same residual SI channel gain. As expected, 2S-SRGR performs the best among the three TDA and UP optimization algorithms. With the same TDA and UP optimization algorithm, i.e., 2S-SRGR, the MMF rates of SCA and WMMSE are almost the same, which means that one can choose either one of them to perform power allocation.

\begin{figure}
	\centering
\linespread{1}
	\includegraphics[width=0.4\textwidth]{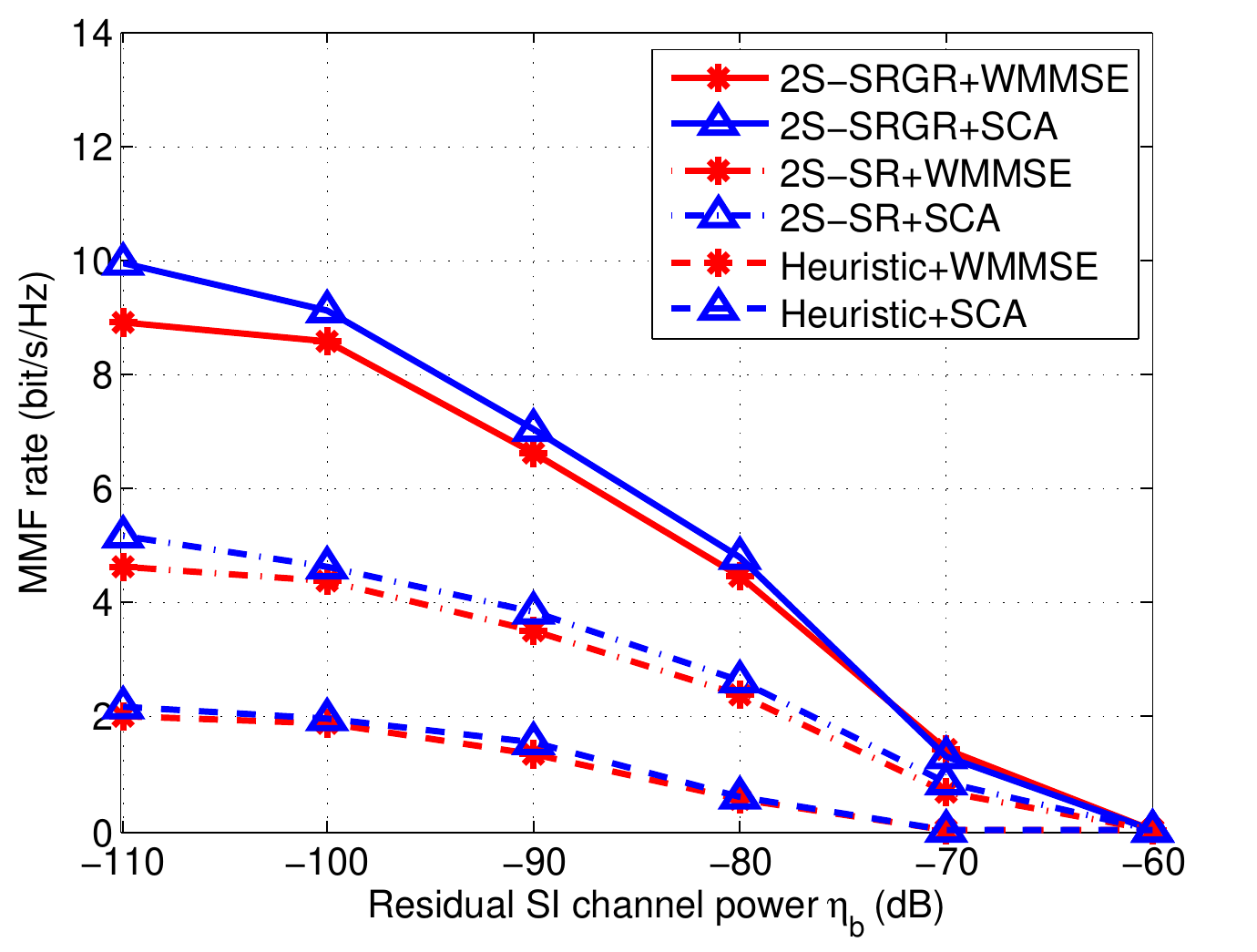}\\
	\caption{The comparison of the MMF rate at the 50th percentile in different residual SI channel gain. In this case, $M=8$, $B=64$, three TDA and UP methods: 2S-SRGR, 2S-SR, Heuristic and two power allocation methods: SCA and WMMSE are considered.}
	\label{fig:mRate_SIC}
\end{figure}
\section{Conclusion}\label{sec:con}
In this paper, we have investigated the joint TDA, UP and power allocation problem for maximizing the MMF rate in the FD multi-user OFDMA system.
The problem has been formulated as a  two-time-scale optimization problem where the TDA and UP variables are optimized to maximize a long-term MMF rate while the power allocation is optimized to maximize the short-term MMF rate. We have conducted the complexity analysis which shows that the considered joint MMF rate maximization problem is strongly NP-hard. The analysis suggests that the joint design problem is particularly difficult to solve in the full load case with $M=2B$.
To obtain the efficient TDA and UP solutions, we have developed several relaxation-and-rounding based algorithms. In particular, in order to avoid from the solutions that do not satisfy the HD transmission constraint of the UEs and to achieve a non-zero MMF rate, we have proposed a two-stage approach with an iterative rounding technique (Algorithm \ref{alg:GR}).
For the full load case, we have further proposed to tighten the relaxation by using the $\ell_q$-norm regularization and the IRM method  (Algorithm \ref{alg:2S-IRMGR}).
The simulation results have shown that the proposed algorithms are effective and greatly outperform the heuristic methods in the literature.

\ifCLASSOPTIONcaptionsoff
  \newpage
\fi

\bibliography{Ref5}
\begin{IEEEbiography}[{\includegraphics[width=1in,height=1.25in,clip,keepaspectratio]{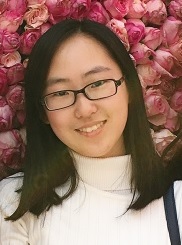}}]{Xiaozhou Zhang}
received the B.S. degree in communication engineering from Beijing Jiaotong University (BJTU), Beijing, China, in 2014. From 2015 to 2017, she visited The Chinese University of Hong Kong, Shenzhen, China. In 2018, she visited Singapore University of Technology and Design. She is currently pursuing the Ph.D. degree at the State Key Laboratory of Rail Traffic Control and Safety, Beijing Jiaotong University. Her research interests include physical layer resource allocation in full-duplex systems, optimization problems in NOMA, and the analysis of wireless communication systems.
\end{IEEEbiography}

\begin{IEEEbiography}[{\includegraphics[width=1in,height=1.25in,clip,keepaspectratio]{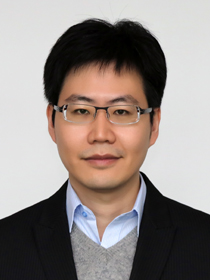}}]{Tsung-Hui Chang}
(S'07--M'08) received the B.S. degree in electrical engineering and the Ph.D. degree in communications engineering from the National Tsing Hua University (NTHU), Hsinchu, Taiwan, in 2003 and 2008, respectively. From 2012 to 2015, he was an Assistant Professor with the Department of Electronic and Computer Engineering, National Taiwan University of Science and Technology (NTUST), Taipei, Taiwan. In August 2015, Dr. Chang joined the School of Science and Engineering, The Chinese University of Hong Kong, Shenzhen, China, as an Assistant Professor, and since August 2018, was promoted to be an Associate Professor. Prior to being a faculty member, Dr. Chang was a visiting PhD student of the University of Minnesota, Minneapolis, MN, USA (2016.09-2018.02), a postdoctoral researcher with NTHU (2008-2011) and with the University of California, Davis, CA, USA (2011-2012). His research interests include signal processing and optimization problems in data communications, machine learning and big data analysis.

Dr. Chang received the Young Scholar Research Award of NTUST in 2014, IEEE ComSoc Asian-Pacific Outstanding Young Researcher Award in 2015, and the IEEE Signal Processing Society Best Paper Award in 2018. He served as an Associate Editor of IEEE TRANSACTIONS ON SIGNAL PROCESSING and IEEE TRANSACTIONS ON SIGNAL AND INFORMATION PROCESSING OVER NETWORKS (2015-2018).
\end{IEEEbiography}

\begin{IEEEbiography}[{\includegraphics[width=1in,height=1.25in,clip,keepaspectratio]{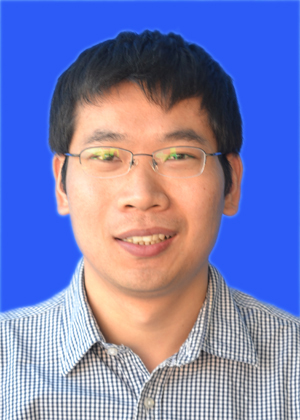}}]{Ya-Feng Liu}
(M'12--SM'18) received the B.Sc. degree in applied mathematics in 2007 from Xidian University, Xi'an, China, and the Ph.D. degree in computational mathematics in 2012 from the Chinese Academy of Sciences (CAS), Beijing, China. During his Ph.D. study, he was supported by the Academy of Mathematics and Systems Science (AMSS), CAS, to visit Professor Zhi-Quan (Tom) Luo at the University of Minnesota (Twins Cities) from February 2011 to February 2012. After his graduation, he joined the Institute of Computational Mathematics and Scientific/Engineering Computing, AMSS, CAS, Beijing, China, in July 2012, where he is currently an Associate Professor. His main research interests are nonlinear optimization and its applications to signal processing, wireless communications, and machine learning. He is especially interested in designing efficient algorithms for solving optimization problems arising from the above applications.

Dr. Liu has served as a guest editor of the Journal of Global Optimization since 2016. He is a recipient of the Best Paper Award from the IEEE International Conference on Communications (ICC) in 2011 and the Best Student Paper Award from the International Symposium on Modeling and Optimization in Mobile, Ad Hoc and Wireless Networks (WiOpt) in 2015. He also received the Chen Jingrun Star Award from the AMSS and the Science and Technology Award for Young Scholars from the Operations Research Society of China in 2018.
\end{IEEEbiography}

\begin{IEEEbiography}[{\includegraphics[width=1in,height=1.25in,clip,keepaspectratio]{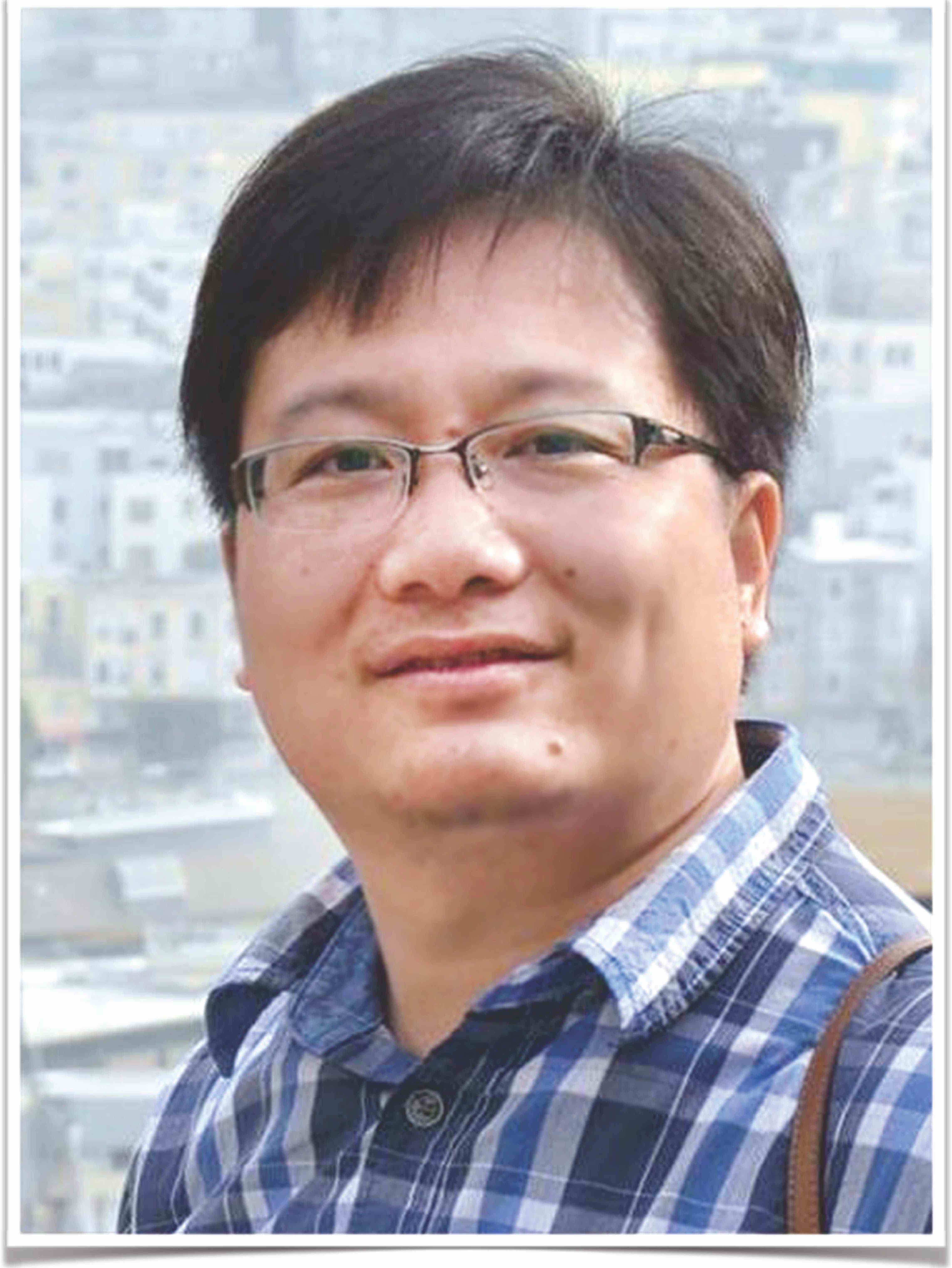}}]{Chao Shen}
(S'12--M'13) received the B.S. degree in communication engineering and the Ph.D. degree in signal and information processing from Beijing Jiao- tong University (BJTU), Beijing, China, in 2003 and 2012, respectively. He held a post-doctoral position at BJTU. He was a Visiting Scholar with the University of Maryland at College Park, College Park, MD, USA, from 2014 to 2015, and The Chinese University of Hong Kong, Shenzhen, from 2017 to 2018. Since 2015, he has been an Associate Professor with the State Key Laboratory of Rail Traffic Control and Safety, BJTU.
His current research interests focus on the ultra-reliable and low-latency communications, unmanned aerial vehicle-enabled wireless communications, and energy-efficient wireless communications for 5G/B5G communications.
\end{IEEEbiography}

\begin{IEEEbiography}[{\includegraphics[width=1in,height=1.25in,clip,keepaspectratio]{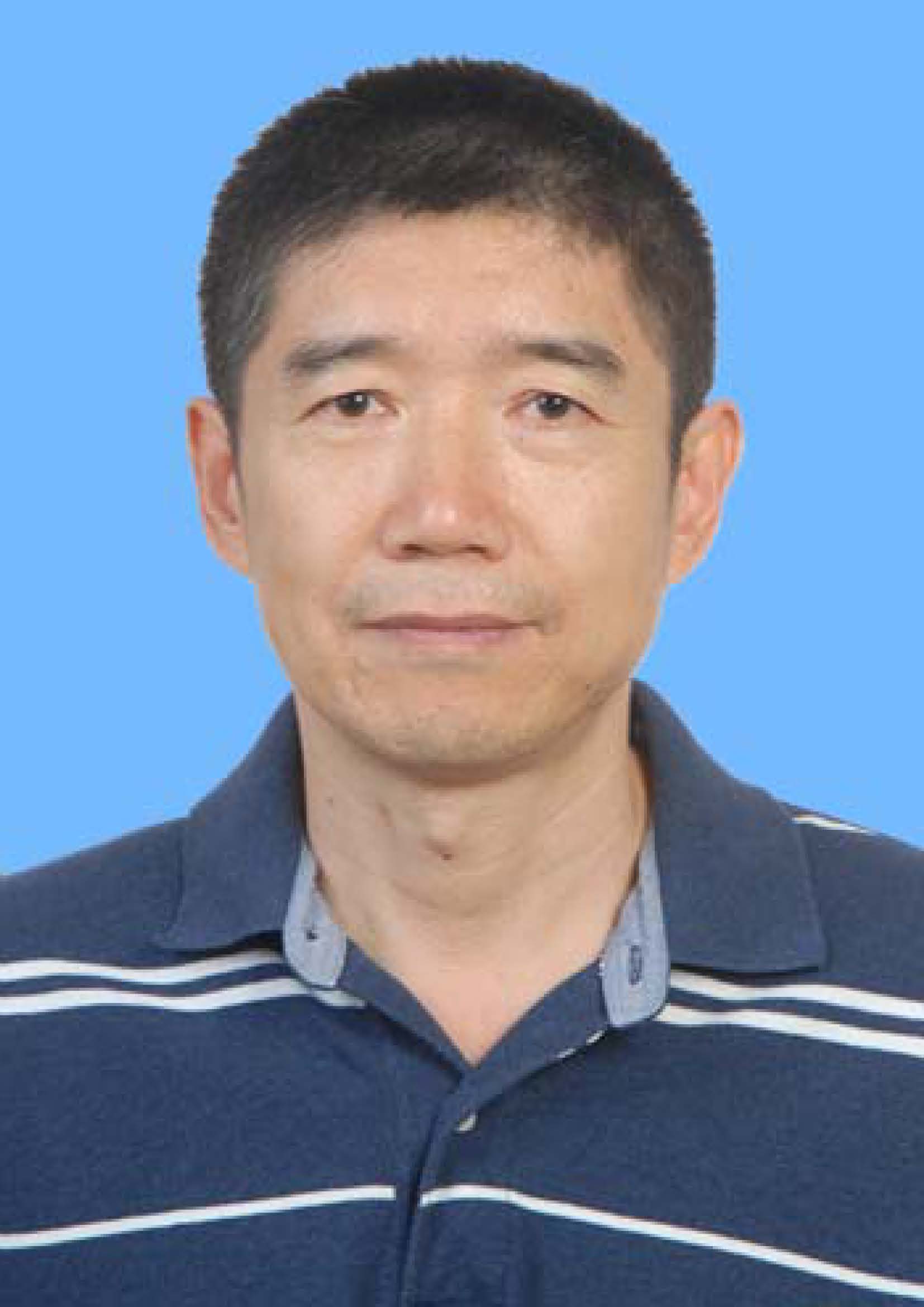}}]{Gang Zhu}
received the M.S. degree in 1993 and PhD degree in 1996 from Xi'an Jiaotong University, Xi'an, China. Since 1996, he has been with Beijing Jiaotong University, Beijing, China, where he is currently a professor. During 2000 to 2001, he was a Visiting Scholar in Department of Electrical and Computer Engineering, University of Waterloo, Canada.
His current research interests include resource management in wireless communications, short distance wireless communications, and global system for mobile communications for railways (GSM-R). He received Top Ten Sciences and Technology Progress of Universities in China in 2007 and First Class Award of Science and Technology in Railway in 2009.
\end{IEEEbiography}

\end{document}